\def\LMtwocolumn{1}
\documentclass[journal]{IEEEtran}
\usepackage[T1]{fontenc}
\usepackage{amsmath,amssymb,amsthm,cite,graphicx,tikz}
\usepackage{xurl}
\usepackage[hidelinks]{hyperref}
\newtheorem{theorem}{Theorem}
\newtheorem{externaltheorem}{Theorem}[section]

\newtheorem{lemma}{Lemma}

\newtheorem{corollary}{Corollary}
\theoremstyle{definition}
\newtheorem{definition}{Definition}

\allowdisplaybreaks[1]
\newcommand{\LMfigtext}[2]{#1}

\title{When Multiple-Unicast Feasibility Is Equivalent to Single-Sink Routability}
\author{Sirui~Liu,~\IEEEmembership{Student Member,~IEEE,}
        Zongpeng~Li,~\IEEEmembership{Senior Member,~IEEE,}
        and~Baochun~Li,~\IEEEmembership{Fellow,~IEEE}%
\thanks{Corresponding author: Zongpeng Li.}
\thanks{Sirui Liu is with the Institute for Network Sciences \& Cyberspace, Tsinghua University, Beijing, China, and also with Quan Cheng Laboratory, Shandong, China (email: liu-sr24@mails.tsinghua.edu.cn).}
\thanks{Zongpeng Li is with Quan Cheng Laboratory, Shandong, China, and also with the Institute for Network Sciences \& Cyberspace, Tsinghua University, Beijing, China (email: zongpeng@tsinghua.edu.cn).}
\thanks{Baochun Li is with the Department of Electrical and Computer Engineering, University of Toronto, Toronto, ON, Canada (email: bli@ece.toronto.edu).}}

\date{}
\hypersetup{pdftitle={When Multiple-Unicast Feasibility Is Equivalent to Single-Sink Routability},pdfauthor={Sirui Liu, Zongpeng Li, Baochun Li}}
\begin{document}
\maketitle
\begin{abstract}
In communication networks, network coding allows intermediate nodes to combine messages, achieving transmission rates unattainable by routing alone in many multicast settings. In multiple unicast, where each source sends a message only to its prescribed destination, a central question is under what conditions fractional routing suffices to attain the transmission rates supported by network coding. We prove that, for any given collection of multiple-unicast demands in an undirected network, if the maximum flow between every pair of distinct destinations is at least the total demand rate, then the following are equivalent in feasibility: multiple-unicast coding, multiple-unicast fractional routing, and simultaneous fractional routing of all source rates to any fixed destination. Verifying this pairwise maximum-flow condition and, when it holds, deciding feasibility of the given multiple-unicast demands require at most as many maximum-flow computations as there are destinations. As a direct corollary, we prove the Langberg--M\'edard conjecture: if a directed network supports coded multicast of independent source messages at a common rate to every destination, then removing arc directions without changing capacities permits simultaneous fractional routing at that rate for all prescribed source--destination pairs.
\end{abstract}

\begin{IEEEkeywords}
Network coding, multiple unicast, multicommodity flow, single-sink routing, Langberg--M\'edard conjecture.
\end{IEEEkeywords}
\section{Introduction}\label{sec:introduction}

Network coding allows intermediate nodes to combine received messages~\cite{ahlswede2000}, potentially achieving transmission rates higher than those attainable by routing. In contrast, fractional routing only forwards messages, while allowing each source's traffic to be split among multiple paths. A basic question in network information theory is therefore under what conditions fractional routing suffices to attain the transmission rates supported by network coding.

In multiple-unicast communication, each source sends an independent message to its prescribed destination. The multiple-unicast conjecture proposed by Li and Li~\cite{lili2004} in 2004 asserted that fractional routing always attains the rates supported by network coding in undirected networks, where each edge can carry traffic in either direction and the two directions share its capacity. However, Zhang, Li, and Li~\cite{zhang2026} and Braverman and He~\cite{braverman2026} recently gave explicit counterexamples, showing that even in undirected networks, coding can support rates unattainable by fractional routing. Earlier results of OpenAI on integer multiplication and exact Fourier circuits~\cite{openai2026integer,openai2026fourier} also imply counterexamples. Li, Li, and Liu~\cite{liliLiu2026sorting} further used coding gains to accelerate external-memory integer sorting. Thus, allowing the two directions of a link to share its capacity is not by itself sufficient to ensure that fractional routing attains coding rates.

In 2009, Langberg and M\'edard~\cite{langberg2009} proposed a weaker version of the multiple-unicast conjecture that assumes multiple-multicast coding in an initial directed network. Suppose a directed network with $k$ source--destination pairs supports coded multiple multicast: the sources send independent messages at a common rate, and every destination recovers all source messages, including its own prescribed message. The Langberg--M\'edard conjecture asserts that removing all arc directions without increasing capacities yields an undirected network in which fractional routing can simultaneously deliver the prescribed messages to their respective destinations at the original common source rate.

For unit arc capacities and common rate one, the feasibility of this coded multiple-multicast task has a path characterization: for each destination considered separately, there are $k$ arc-disjoint directed paths connecting the $k$ sources, respectively, to that destination, a property called strong reachability~\cite{langberg2009}. In the underlying undirected network, the target routing must simultaneously carry one unit of flow for every prescribed source--destination pair, with each edge's unit capacity shared by both directions; the difficulty is that the directed path families for different destinations may overlap, so combining these paths directly may create capacity conflicts. Prior work on strongly reachable networks established common routing rates of $1/3$~\cite{langberg2009} and $8/9$~\cite{cai2017,cai2023} for any number of pairs, and proved unit-rate routability for three-pair directed acyclic networks~\cite{cai2020,cai2022} and networks whose collapsed backbone is a rooted binary tree~\cite{cai2024}.

Rather than attempting to untangle overlapping directed path families directly, we examine the flow properties that multiple-multicast coding imparts to the underlying undirected network: after removing arc directions, all sources can simultaneously send their traffic at the original rates to any chosen destination, forming a single-sink routing, and the maximum flow between any two distinct destinations is at least the total source rate.

Motivated by this observation, we prove a feasibility equivalence theorem for undirected networks (Theorem~\ref{thm:main-equivalence}): for any collection of multiple-unicast demands with given sources, destinations, and rates, if the maximum flow between every pair of distinct destinations is at least the total demand rate, then simultaneous fractional routing to the prescribed destinations, network coding for the same demands, and simultaneous fractional routing of all source rates to a single fixed destination are equivalent in feasibility. Intuitively, single-sink routability ensures that the aggregate traffic from all sources can enter the destination set without bottlenecks, while pairwise maximum flow of at least the total demand rate ensures that this traffic can be delivered simultaneously to the respective targets through the network. Because the common destination may be fixed in advance, verifying this maximum-flow condition and, when it holds, deciding multiple-unicast coding or routing feasibility require at most as many maximum-flow computations as there are destinations (Corollary~\ref{cor:maxflow-test}). For a directed network supporting multiple-multicast coding, applying Theorem~\ref{thm:main-equivalence} to its underlying undirected network immediately proves the Langberg--M\'edard conjecture (Corollary~\ref{cor:lm-conjecture}).

To prove Theorem~\ref{thm:main-equivalence}, we use linear programming duality for multicommodity flow~\cite{iri1971,onaga1971}, which compares the total capacity cost under any nonnegative edge pricing with the rate-weighted sum of shortest-path costs for the demands. First, using a theorem of Kir\'aly~\cite[Theorem~2.1]{kiraly2007}, we eliminate internal nodes and edges between sources while preserving single-sink routability and the pairwise maximum-flow condition between destinations, without increasing the total capacity cost. Next, Mader's splitting-off theorem~\cite{mader1978},~\cite[Theorem~1.2]{lau2013} lets us eliminate each source by replacing pairs of its incident edges with edges between their destination endpoints, preserving the maximum flows between destinations. An inequality involving distances among three destinations then gives the required routing-cost bound after the added-edge costs are subtracted from both sides. This method of successively eliminating internal nodes and sources also applies to the analysis of other multicommodity-flow and network information-flow problems.

The remainder of the paper is organized as follows. Section~\ref{sec:related_work} reviews related work. Section~\ref{sec:model} introduces the network model, fractional-routing duality, and the flow characterization of multiple-multicast coding. Section~\ref{sec:results} states the main results and outlines the proof. Section~\ref{sec:core} proves the metric inequality and routing theorem when sources are adjacent only to destinations. Section~\ref{sec:reduction} eliminates internal nodes and edges between sources to complete the proof of Theorem~\ref{thm:main-equivalence}. Section~\ref{sec:strong_reachability} deduces the Langberg--M\'edard conjecture from Theorem~\ref{thm:main-equivalence}, and Section~\ref{sec:conclusion} concludes. Appendix~\ref{sec:external-results} states the forms of Mader's and Kir\'aly's theorems used in the proofs.

\section{Related Work}\label{sec:related_work}

\subsection{The Multiple-Unicast Conjecture}
Li and Li~\cite{lili2004} and, independently, Harvey, Kleinberg, and Lehman~\cite{harvey2004} proposed the multiple-unicast conjecture in 2004. The conjecture asserts that network coding and fractional routing have the same capacity region in undirected multiple-unicast networks. One line of positive results follows from cut sufficiency on special graph classes: every rate vector achievable by coding satisfies the cut condition, so the conjecture holds whenever this condition also guarantees fractional routing. Representative network classes follow from Hu's two-commodity theorem~\cite{hu1963}, Seymour's four-terminal theorem~\cite{seymour1980}, and the planar-network results of Okamura and Seymour~\cite{okamura1981} and Okamura~\cite{okamura1983}.

When the cut condition does not characterize routing feasibility, message independence and decodability provide further constraints on coding rates. Jain et al.~\cite{jain2006} introduced the Input--Output and Crypto inequalities and proved equality of coding and routing rates for a four-session instance on $K_{3,2}$. Harvey et al.~\cite{harvey2006} introduced informational dominance and combined it with entropy inequalities to establish the conjecture for a class of special bipartite networks. Qureshi and Thakor~\cite{qureshi2022} developed the partition bound and showed that routing attains this upper bound on the common coding rate in two classes of complete multipartite networks.

Structural reductions provide another route to extending these results. Yin et al.~\cite{yin2018} developed a reduction method based on relations between cut sets and source--destination paths, established the conjecture for a new network class, and combined the method with computer-aided search to prove the conjecture for networks with at most six coding nodes. Liu et al.~\cite{liu2026session} introduced session dominance, which compares collections of sessions through shortest-path containment. A conclusion for a dominating collection transfers to a dominated collection, yielding reductions for three-session instances and a network class characterized by topological conditions.

Geometric and metric methods study the same question through transmission costs. Li and Wu~\cite{liwu2012} introduced space information flow to compare coding and routing costs in geometric spaces; Xiahou et al.~\cite{xiahou2014} used metric embeddings to connect graph instances with space instances. Liu and Li~\cite{liuli2025} proved equality of coding and routing costs in $\ell_p^n$ for $1 \le p \le 2$. Subsequently, Liu et al.~\cite{liu2026metric} developed a unified graph-metric framework, proving the conjecture for networks with at most five terminal locations, planar networks whose terminals lie on at most three designated face boundaries and whose individual sessions have both endpoints on a common designated face boundary, and a network class specified by the structure of session endpoints.

Recently, Zhang, Li, and Li~\cite{zhang2026} and Braverman and He~\cite{braverman2026} independently gave explicit counterexamples, showing that the conjecture fails in general undirected networks. Together with the positive results above, these counterexamples show that the relationship between coding and routing depends on more specific network conditions. This paper establishes a pairwise maximum-flow condition on the destinations under which multiple-unicast routing, multiple-unicast coding, and single-sink routing are equivalent in feasibility, and uses it to resolve the Langberg--M\'edard conjecture.

\subsection{The Langberg--M\'edard Conjecture}
Langberg and M\'edard~\cite{langberg2009} proposed a weaker version of the multiple-unicast conjecture: if coding in a directed network allows every destination to receive the independent messages sent by all sources at a common rate, then removing all arc directions without increasing capacities should allow simultaneous fractional routing of the prescribed source--destination pairs at that rate. With unit arc capacities and unit common rate, this coding hypothesis is equivalent to strong reachability: for each destination separately, a family of arc-disjoint paths connects all sources to that destination.

For an arbitrary number of source--destination pairs, Langberg and M\'edard~\cite{langberg2009} first established a common routing rate of $1/3$. Cai and Han~\cite{cai2017} improved this guarantee to $8/9$ by combining elementary flows through matrix optimization. Their subsequent work~\cite{cai2023} characterized the asymptotic optimum of that optimization framework and further analyzed its finite-parameter optima.

For unit-rate routing, Cai and Han~\cite{cai2020,cai2022} worked with directed acyclic networks, first proving the conjecture for stable three-pair networks and then for all strongly reachable three-pair networks in that model. They also established the conjecture for any number of pairs when the collapsed backbone has the structure of a rooted binary tree~\cite{cai2024}. We prove the conjecture as a consequence of this feasibility equivalence.

\subsection{Network Reduction}
Splitting off is a basic operation for simplifying a network: two edges incident to the same node are replaced by an edge joining their other endpoints. Mader~\cite{mader1978} proved that, under suitable conditions, this operation can be repeated to eliminate a node while preserving edge connectivities among the remaining nodes. We use the formulation stated by Lau and Yung~\cite[Theorem~1.2]{lau2013} to split off sources that are adjacent only to destinations and use the preserved connectivity to establish a lower bound on routing cost.

Reducing a general network to one containing only sources and destinations requires eliminating internal nodes while preserving single-sink routability and the pairwise maximum-flow condition and controlling cost. Kir\'aly's result on minimum-cost cut covering~\cite[Theorem~2.1]{kiraly2007} provides a tool for this purpose. We use this result to eliminate all internal nodes and edges between sources without increasing cost, thereby reducing the proof to the network structure studied in Section~\ref{sec:core}.

\section{Notation and Preliminaries}\label{sec:model}

\subsection{Network Model}
Let $D = (V, \mathcal{A})$ be a finite directed multigraph, where $V$ is the node set and $\mathcal{A}$ is the arc set. Directed cycles are allowed. An arc $a \in \mathcal{A}$ from $u$ to $v$ permits transmission from $u$ to $v$, and its finite nonnegative capacity $c_a \ge 0$ specifies the maximum transmission rate on that arc. The directed network has sources $s_1, \dots, s_k$ and destinations $t_1, \dots, t_k$, where $k \ge 1$. Source nodes may coincide, destination nodes may coincide, and a source and a destination may be the same node. Replacing every arc $a$ from $u$ to $v$ by an undirected edge $e$ between the same nodes, with capacity $c_e = c_a$, gives the underlying undirected multigraph of $D$, denoted by $\overline{D}$. In $\overline{D}$, each edge permits transmission in either direction, and the sum of the two directional rates cannot exceed $c_e$. Oppositely directed arcs become two parallel undirected edges, whose capacities may be added.

For general undirected routing, let $G = (V, E)$ be a finite undirected multigraph. Each edge $e \in E$ has a finite nonnegative real capacity $c_e \ge 0$ shared by its two directions, and $c = (c_e)_{e \in E}$ denotes the capacity vector. We denote this network by $(G, c)$. Parallel edges may be combined by adding their capacities, and loops may be deleted. For any $U \subseteq V$, the cut $\delta_G U$ consists of the edges with exactly one endpoint in $U$, and its capacity is $c(\delta_G U) = \sum_{e \in \delta_G U} c_e$. For any two distinct nodes $u, v \in V$, write $\operatorname{maxflow}_G(u, v)$ for the single-commodity maximum flow between $u$ and $v$ in $(G, c)$, evaluated separately for each pair. By the max-flow min-cut theorem, $\operatorname{maxflow}_G(u, v)$ equals the minimum of $c(\delta_G U)$ over all cuts $U \subseteq V$ separating $u$ and $v$.

\subsection{Fractional Routing}
For a positive integer $q$, write $[q] = \{1, \dots, q\}$. In the undirected network $(G, c)$, consider a collection of multiple-unicast demands $\mathcal{D} = ((s_i, t_i, r_i))_{i \in [k]}$, where source $s_i \in V$ must send an independent message at rate $r_i \ge 0$ to its prescribed destination $t_i \in V$, and let $R = \sum_{i=1}^k r_i$ be the total demand rate. Messages from different sessions are not encoded together, and source or destination nodes may coincide. Let $\mathcal{P}_i$ be the finite set of all simple paths in $G$ joining $s_i$ to $t_i$; when $s_i = t_i$, $\mathcal{P}_i$ consists of a single path of length zero using no edges. A simultaneous fractional routing of these demands is a family of nonnegative path flows $f_{iP} \ge 0$ satisfying
\begin{equation}
\sum_{P \in \mathcal{P}_i} f_{iP} = r_i \quad (i \in [k]), \label{eq:flow-demand}
\end{equation}
\begin{equation}
\sum_{i=1}^k \sum_{\substack{P \in \mathcal{P}_i \\ e \in P}} f_{iP} \le c_e \quad (e \in E). \label{eq:flow-capacity}
\end{equation}
Thus, each source $s_i$ may split $r_i$ units of traffic among several paths, and the total flow from all sources on an edge $e$, summed over both directions, cannot exceed $c_e$. Fractional routing in a directed network is defined in the same way, using directed paths for $\mathcal{P}_i$ and requiring the total flow on every arc $a$ to be at most $c_a$.

Linear programming duality for multicommodity flow~\cite{iri1971,onaga1971} characterizes fractional-routing feasibility by comparing the capacity cost of the network with the shortest-path cost of the demands. Assign a nonnegative length $\ell_e \ge 0$ to each edge $e \in E$, and let $d_\ell(u, v)$ denote the shortest-path distance between $u$ and $v$ in $G$ under the edge lengths $\ell = (\ell_e)_{e \in E}$. Viewing $\ell_e$ as the price per unit of capacity on edge $e$, the total capacity cost of the network is $\sum_{e \in E} c_e \ell_e$. Simultaneous fractional routing is feasible if and only if, for every choice of nonnegative edge lengths, the total capacity cost is at least the rate-weighted sum of the shortest-path distances of the source--destination pairs.

Zero-rate demands may be discarded before forming the distances in the following lemma.

\begin{lemma}[Fractional-routing duality]\label{lem:duality}
Suppose all positive-rate demand endpoints belong to one connected component of $G$. The demands $(s_i, t_i, r_i)$ in \eqref{eq:flow-demand}--\eqref{eq:flow-capacity} admit a simultaneous fractional routing if and only if, for every $\ell \in \mathbb{R}_+^E$,
\begin{equation}
\sum_{i=1}^k r_i d_\ell(s_i, t_i) \le \sum_{e \in E} c_e \ell_e. \label{eq:duality}
\end{equation}
\end{lemma}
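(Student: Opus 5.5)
We prove the two directions separately. Necessity is a weak-duality computation; sufficiency follows from Farkas' lemma applied to the path-flow system \eqref{eq:flow-demand}--\eqref{eq:flow-capacity}.

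\emph{Necessity.} Suppose a simultaneous fractional routing $(f_{iP})$ exists, and fix any $\ell \in \mathbb{R}_+^E$. Every path $P \in \mathcal{P}_i$ has length $\sum_{e\in P}\ell_e \ge d_\ell(s_i,t_i)$. Multiply by $f_{iP}$ and sum over $P\in\mathcal{P}_i$ and $i\in[k]$. By \eqref{eq:flow-demand} the left side is at least $\sum_i r_i d_\ell(s_i,t_i)$. Interchanging the order of summation, the right side equals $\sum_e \ell_e \sum_{i}\sum_{P\ni e} f_{iP}$, which by \eqref{eq:flow-capacity} and $\ell_e\ge 0$ is at most $\sum_e c_e\ell_e$. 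This gives \eqref{eq:duality}. The connectivity hypothesis guarantees that every positive-rate demand has $\mathcal{P}_i\neq\emptyset$ and finite distance. Zero-rate demands can be discarded, since they contribute nothing to either side.

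\emph{Sufficiency.} Write feasibility as the linear system: find $f \ge 0$ with $Af = r$ and $Bf \le c$. Here $A$ is the demand--path incidence matrix and $B$ is the edge--path incidence matrix. Zero-rate demands are discarded, and for positive-rate demands $\mathcal{P}_i$ is finite and nonempty. By Farkas' lemma, infeasibility yields multipliers $y\in\mathbb{R}^k$ and $\ell\in\mathbb{R}_+^E$ with two properties:
\begin{itemize}
\item $y_i \le \sum_{e\in P}\ell_e$ for every $i$ and every $P\in\mathcal{P}_i$, i.e.\ $A^\top y - B^\top \ell \le 0$;
\item $\sum_i r_i y_i > \sum_e c_e\ell_e$.
\end{itemize}
The first property says $y_i \le \min_{P\in\mathcal{P}_i}\ell(P) = d_\ell(s_i,t_i)$. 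Since a shortest walk can be shortcut to a simple path, the minimum over simple paths equals the shortest-path distance. Because $r_i\ge 0$, we get $\sum_i r_i d_\ell(s_i,t_i) \ge \sum_i r_i y_i > \sum_e c_e\ell_e$, contradicting \eqref{eq:duality}. Hence a routing exists.

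The only delicate point is setting up the Farkas alternative with the correct sign conventions: equality constraints give free multipliers $y$, and inequality constraints give nonnegative multipliers $\ell$. One must also check the degenerate cases. When $s_i=t_i$, the single zero-length path forces $y_i\le 0 = d_\ell(s_i,t_i)$, so the argument still applies. Connectivity matters because otherwise some $\mathcal{P}_i$ would be empty and $d_\ell$ infinite. Beyond this the argument is the standard LP duality of Iri and Onaga~\cite{iri1971,onaga1971}, which could alternatively be cited directly.
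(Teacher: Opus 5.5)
Your proof is correct and follows the paper's argument essentially step for step. Necessity is the weak-duality chain $\sum_i r_i d_\ell(s_i,t_i) \le \sum_i\sum_P f_{iP}\sum_{e\in P}\ell_e \le \sum_e c_e\ell_e$. Sufficiency uses the Farkas alternative with free multipliers $y_i$ for the demand equalities and nonnegative $\ell_e$ for the capacity constraints; the path constraints then reduce to $y_i \le d_\ell(s_i,t_i)$, which contradicts \eqref{eq:duality}. One small wording slip: in your necessity paragraph, the sum you interchange is the path-weighted length sum you had just called the ``left side,'' not the ``right side.'' The chain of inequalities itself is as intended.
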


\begin{proof}
For necessity, let $f_{iP} \ge 0$ be a feasible fractional routing satisfying \eqref{eq:flow-demand}--\eqref{eq:flow-capacity}. By the definition of shortest-path distance, every $P \in \mathcal{P}_i$ satisfies $d_\ell(s_i, t_i) \le \sum_{e \in P} \ell_e$. Using \eqref{eq:flow-demand}, interchanging the finite sums, and applying \eqref{eq:flow-capacity} gives
\begin{align}
\sum_{i=1}^k r_i d_\ell(s_i, t_i) &= \sum_{i=1}^k \sum_{P \in \mathcal{P}_i} f_{iP} d_\ell(s_i, t_i) \label{eq:duality-routing-cost-step-1} \\
&\le \sum_{i=1}^k \sum_{P \in \mathcal{P}_i} f_{iP} \sum_{e \in P} \ell_e \label{eq:duality-routing-cost-step-2} \\
&= \sum_{e \in E} \ell_e \sum_{i=1}^k \sum_{\substack{P \in \mathcal{P}_i \\ e \in P}} f_{iP} \le \sum_{e \in E} c_e \ell_e. \label{eq:duality-routing-cost}
\end{align}

For sufficiency, demands with $r_i = 0$ are served by zero flow, so we restrict attention to indices with $r_i > 0$. Each corresponding path set $\mathcal{P}_i$ is nonempty and finite because all positive-rate demand endpoints belong to one connected component of $G$. Associate a dual variable $y_i \in \mathbb{R}$ with the $i$th equality in \eqref{eq:flow-demand} and a nonnegative dual variable $\ell_e \ge 0$ with the constraint for edge $e$ in \eqref{eq:flow-capacity}. By linear programming duality, the nonnegative system \eqref{eq:flow-demand}--\eqref{eq:flow-capacity} is infeasible if and only if there exist $y = (y_i)_{i \in [k]} \in \mathbb{R}^k$ and $\ell \in \mathbb{R}_+^E$ such that, for every $i \in [k]$ with $r_i > 0$ and every $P \in \mathcal{P}_i$,
\begin{equation}
y_i \le \sum_{e \in P} \ell_e, \label{eq:farkas-path-bounds}
\end{equation}
while
\begin{equation}
\sum_{i=1}^k r_i y_i > \sum_{e \in E} c_e \ell_e. \label{eq:farkas-separation}
\end{equation}
For fixed $\ell \in \mathbb{R}_+^E$, the constraints over all paths in the nonempty set $\mathcal{P}_i$ are equivalent to
\begin{equation}
y_i \le \min_{P \in \mathcal{P}_i} \sum_{e \in P} \ell_e = d_\ell(s_i, t_i). \label{eq:farkas-distance-bound}
\end{equation}
Multiplying these upper bounds by $r_i \ge 0$ and summing over $i \in [k]$ yields
\begin{equation}
\sum_{i=1}^k r_i d_\ell(s_i, t_i) \ge \sum_{i=1}^k r_i y_i > \sum_{e \in E} c_e \ell_e, \label{eq:farkas-contradiction}
\end{equation}
contradicting \eqref{eq:duality}. Hence \eqref{eq:flow-demand}--\eqref{eq:flow-capacity} has a feasible solution.
\end{proof}

In later sections, we extend the use of shortest-path distances to pseudometrics on general finite sets. A pseudometric $d$ on a finite set $W$ is a symmetric nonnegative function satisfying $d(u, u) = 0$ and the triangle inequality; distinct points may have distance zero. The shortest-path distance $d_\ell$ on a connected graph is a finite pseudometric on its node set. Let $\binom{W}{2}$ denote the set of unordered pairs of distinct elements of $W$. For nonnegative capacities $x_{uv}$ on these pairs, write
\begin{equation}
x \cdot d = \sum_{\{u, v\} \in \binom{W}{2}} x_{uv} d(u, v). \label{eq:capacity-cost}
\end{equation}
Here $d(u, v)$ is the price per unit of capacity on the pair $\{u, v\}$, so $x \cdot d$ is the total cost of capacities $x$ at these prices.

\subsection{Network Coding}\label{subsec:multicast}
In the directed network $D = (V, \mathcal{A})$, source $s_i$ holds a message $X_i$, and the source messages are mutually independent. In multiple unicast, each destination $t_i$ needs only the message $X_i$ from its corresponding source $s_i$. In multiple multicast, every destination $t_j$ must recover all $k$ source messages $X_1, \dots, X_k$. A coding solution allows each node to encode its own source messages and previously received symbols before transmission, subject to the arc capacities. A common source rate $\rho$ is achievable if there is a sequence of coding solutions whose rates at all sources approach $\rho$ and whose decoding error probabilities tend to zero as the communication duration grows.

In an undirected network $(G,c)$, coding follows the same causal rule, with the sum of the transmission rates in the two directions of each edge bounded by its capacity. A demand collection $\mathcal{D}=((s_i,t_i,r_i))_{i\in[k]}$ is achievable by multiple-unicast coding if there is a sequence of coding solutions whose source rates approach $r_i$ and in which each prescribed destination $t_i$ recovers $X_i$ with error probability tending to zero as the communication duration grows.

Unlike multiple-unicast coding, multiple-multicast coding feasibility can be checked one destination at a time~\cite[Claim~1]{langberg2009}. To do so, construct an equivalent single-source multicast network. Add a new node $z$ holding all $k$ messages as a super-source, and add an arc $z \to s_i$ of capacity $\rho$ for every source $s_i$ to supply it with $X_i$. Delivering all messages to every destination at common rate $\rho$ in the original network is then equivalent to a single-source multicast task from $z$ at total rate $k\rho$. Li et al.~\cite{li2003linear} established the sufficiency of linear coding for single-source multicast. For networks with directed cycles, the formulation with transmission delays of Koetter and M\'edard~\cite[Theorem~15]{koetter2003algebraic} gives the same achievable rate: the minimum of the maximum flow values from the source to its destinations. Thus, this multicast task is feasible if and only if, for each destination $t_j$ considered separately, there is a feasible directed fractional flow from $z$ to $t_j$ of value $k\rho$.

The only arcs leaving $z$ are the $k$ new arcs of capacity $\rho$, so a total flow of $k\rho$ requires every new arc $z \to s_i$ to carry exactly $\rho$. Decomposing the flow into paths and deleting their first arcs $z \to s_i$ gives an equivalent condition in the original network: for every fixed destination $t_j$, there must be a feasible directed fractional flow in $D$ that simultaneously sends $\rho$ units from each of $s_1, \dots, s_k$ to $t_j$. Flows for different destinations may use the same arc because each destination is checked separately in the same network.

In particular, when all arc capacities equal $1$ and the common rate is $\rho = 1$, all capacities in the augmented network and the target flow value $k$ are integers. By the integrality theorem for maximum flows~\cite[Theorem~4.5.5]{bangjensen2009digraphs}, a directed fractional flow of value $k$ to a destination $t_j$ exists if and only if there is such an integral flow with every arc carrying either $0$ or $1$. Under unit arc capacities, decomposing this integral flow into paths and deleting the new arcs gives $k$ pairwise arc-disjoint directed paths from $s_1, \dots, s_k$ to $t_j$. This property is called strong reachability.

\begin{definition}[Strong reachability]\label{def:strong}
A unit-capacity directed network $D$ is strongly reachable if, for every $j \in [k]$, there are directed paths $P_{1j}, \dots, P_{kj}$ from $s_1, \dots, s_k$, respectively, to $t_j$, and no two of these $k$ paths share an arc.
\end{definition}

Thus, in the unit-capacity setting, strong reachability is equivalent to multiple-multicast coding at common rate one.

\section{Main Results}\label{sec:results}

Our main results are an equivalence among multiple-unicast routing, multiple-unicast coding, and single-sink routing feasibility in undirected networks and the resulting resolution of the Langberg--M\'edard conjecture.

\subsection{Coding and Routing Feasibility}
In the undirected network $(G, c)$, consider any given collection of multiple-unicast demands $\mathcal{D} = ((s_i, t_i, r_i))_{i \in [k]}$ with rates $r_i \ge 0$ and total demand rate $R = \sum_{i=1}^k r_i$. Let $T \subseteq V$ be a nonempty node subset containing all prescribed destinations $t_1, \dots, t_k$.

\begin{samepage}
\begin{theorem}[Equivalence of coding and routing feasibility]
\label{thm:main-equivalence}
Suppose the destination set $T \subseteq V$ in the undirected network $(G, c)$ satisfies the pairwise maximum-flow condition
\begin{equation}
\operatorname{maxflow}_G(u, v) \ge R \quad \text{for all distinct } u, v \in T. \label{eq:destination-connectivity}
\end{equation}
Choose any one destination $t_0 \in T$ and keep it fixed. The following are equivalent:
\begin{enumerate}
\item The given demands $\mathcal{D}$ admit a simultaneous fractional routing in $(G, c)$.
\item The given demands $\mathcal{D}$ are achievable by network coding in $(G, c)$.
\item All sources can simultaneously send their respective rates $r_i$ to the common destination $t_0$ by fractional routing in $(G, c)$.
\end{enumerate}
\end{theorem}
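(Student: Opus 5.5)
The easy implications come first. (1)$\Rightarrow$(2) holds because routing is a special case of coding. For (2)$\Rightarrow$(3), use a cut argument. Any coding solution satisfies the cut condition for sets of demands: for every $U\subseteq V$, the total rate of demands separated by $\delta_G U$ is at most $c(\delta_G U)$. To see that this suffices for (3), add a super-sink structure. In the single-sink problem, (3) is a single-commodity flow from a super-source $z$ (arcs $z\to s_i$ of capacity $r_i$) to $t_0$. By max-flow min-cut, it is feasible iff $c(\delta_G U)\ge \sum_{i: s_i\in U} r_i$ for every $U$ with $t_0\notin U$.

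Fix such a $U$. Sessions with $t_i\notin U$ are cut by $U$, so the coding cut bound gives
\[
c(\delta_G U)\ge \sum_{s_i\in U,\,t_i\notin U} r_i .
\]
If some $t_i\in U$ has $s_i\in U$, then $U$ separates $t_i$ from $t_0$, so \eqref{eq:destination-connectivity} gives $c(\delta_G U)\ge R\ge \sum_{s_i\in U} r_i$. The only remaining subtlety is $t_i=t_0$; then $t_i\notin U$ and that session is already counted in the first case. Hence (3) holds. The coding cut bound itself is standard: edges of the cut carry all information from $U$ to $V\setminus U$, by an entropy argument on undirected cuts. I will cite it rather than reprove it.

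The substantive direction is (3)$\Rightarrow$(1). By Lemma~\ref{lem:duality}, I must show that
\[
\sum_i r_i d_\ell(s_i,t_i)\le \sum_e c_e\ell_e
\]
for every $\ell\ge0$. From (3), Lemma~\ref{lem:duality} applied to the single-sink demands gives
\[
\sum_i r_i d_\ell(s_i,t_0)\le \sum_e c_e\ell_e .
\]
Since $t_0$ is arbitrary, (3) for one $t_0$ implies (3) for every destination. Indeed, routing to $t_0$ and then forwarding the aggregate $R$ from $t_0$ to $t$ needs more capacity, so it cannot be used directly. Instead I would argue via cuts as above: (3) at $t_0$ plus \eqref{eq:destination-connectivity} yields the cut condition for every sink in $T$. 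The plan then follows the paper's outline. First, use Kir\'aly's theorem to eliminate internal nodes and source--source edges. This replaces $(G,c)$ by a network on sources and $T$ whose capacity cost under the pseudometric $d_\ell$ restricted to terminals is at most $\sum_e c_e\ell_e$. The replacement must preserve single-sink routability and \eqref{eq:destination-connectivity}, and distances can only grow under metric completion, which only strengthens what is needed.

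In the reduced network every source is adjacent only to destinations. I then apply Mader's splitting-off at each source $s$ (degree conditions are met since sources are not cut-vertices in the relevant sense; parity is handled by doubling capacities). Splitting pairs of edges $s t,\ s t'$ into $t t'$ preserves all maxflows between destinations. The cost change is
\[
x_{tt'}\,d(t,t')-x\bigl(d(s,t)+d(s,t')\bigr)\le0,
\]
so it is controlled by the triangle inequality. The key inequality to establish is that the demand cost $r_i d(s_i,t_i)$ is charged against the edges at $s_i$ minus the split-off edge costs. Concretely, I need $d(s,t_i)+d(s,t)+d(s,t')-d(t,t')$ to be bounded in terms of $d(s,t)$, $d(s,t')$, and distances among three destinations. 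This is the three-destination metric inequality. After removing all sources, the remaining destination-only network must still satisfy \eqref{eq:destination-connectivity}, and its cost must cover the leftover.

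The main obstacle is this last accounting: matching each source's demand rate $r_i$ against its incident capacity (at least $r_i$ by single-sink routability) while the split edges absorb the excess. I would first try to prove, for a source $s$ adjacent only to destinations with incident capacities $x_{st}$, the bound
\[
r\,d(s,t_i)+\text{(cost of split edges)}\le \sum_t x_{st}\,d(s,t).
\]
I would do this by pairing the edge toward $t_i$ with the others.
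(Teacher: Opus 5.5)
Your arguments for (1)$\Rightarrow$(2) and (2)$\Rightarrow$(3) are correct and are essentially the paper's cut argument. Using Kir\'aly's theorem to remove internal nodes and source--source edges is also the paper's first reduction. Two details there need fixing. You need the parsimony equality $x(\delta\{s_i\})=r_i$, not merely ``at least $r_i$'', so that the reduced capacities satisfy \eqref{eq:core-cut} with total rate exactly $R$. And Mader's hypothesis is even degree with no incident bridge, which doubling provides; it is not a cut-vertex condition.

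The genuine gap is the core accounting step. The per-source bound you propose to prove,
\[
r\,d(s,t_i)+(\text{cost of split edges})\le \sum_t x_{st}\,d(s,t),
\]
is false. Take $T=\{u,t_i\}$ and one source $s$ joined only to $u$ with $x_{su}=r$. Add a destination edge $B_{ut_i}=r$, so \eqref{eq:core-cut} holds, and use the pseudometric $d(s,u)=0$, $d(u,t_i)=d(s,t_i)=1$. Splitting at $s$ creates no edges, the right side is $0$, and the left side is $r$. The demand must cross the destination edge $ut_i$, so its cost cannot be charged to the source's incident edges. Summing any such local bound over sources would prove the required inequality without using $B$ at all. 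Your local comparison $x_{tt'}d(t,t')-x\bigl(d(s,t)+d(s,t')\bigr)\le 0$ tracks the wrong quantity for the same reason.

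What is missing is the global destination-side inequality of Theorem~\ref{thm:destination-routing}. First pay the leg $d(s_i,u)$ of the triangle inequality $d(s_i,t_i)\le d(s_i,u)+d(u,t_i)$ with the source-edge cost $\sum_{i,u}A_{iu}d(s_i,u)$. You must then show $Q=\sum_{i,u}A_{iu}d(u,t_i)\le B\cdot d$ from the cut condition on all destination partitions; the source positions play no further role.

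The paper does this in four steps:
\begin{enumerate}
\item It splits off every source with pairing weights $H_i$ satisfying $\sum_{u\le v}H_i(u,v)=r_i$ and the degree identity \eqref{eq:H-degree}. The resulting network $K=B+\tfrac12\sum_iH_i$ has every nontrivial destination cut at least $R$.
\item Lemma~\ref{lem:three-ball} converts this connectivity into the cost bound $K\cdot d\ge R\,M(d)$, where $M(d)$ is the largest half-perimeter of a destination triple.
\item The degree identity \eqref{eq:H-degree} rewrites $Q+\tfrac12H\cdot d=\sum_i\sum_{u\le v}H_i(u,v)\,\pi_d(u,v,t_i)$, which is at most $R\,M(d)$.
\item The auxiliary cost $\tfrac12 H\cdot d$ appears on both sides and cancels, giving $Q\le B\cdot d$.
\end{enumerate}
Your outline mentions a three-destination inequality, but it identifies neither this half-perimeter averaging nor how connectivity of the split network becomes a lower bound on its cost. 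Without them the accounting cannot be closed.
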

\end{samepage}

The pairwise maximum flows in \eqref{eq:destination-connectivity} are evaluated separately in the original network $(G, c)$. Under \eqref{eq:destination-connectivity}, single-sink routability does not depend on the choice of $t_0 \in T$, so it suffices to fix an arbitrary node in $T$, such as $t_1$. Moreover, the demand rates $r_i \ge 0$ need not be equal, the edge capacities may be arbitrary nonnegative real numbers, and source or destination locations may coincide. The proof of Theorem~\ref{thm:main-equivalence} is completed in Section~\ref{sec:reduction}.

Add a super-source $q$ to $(G, c)$ and an edge $q s_i$ of capacity $r_i$ for each $i \in [k]$, retaining parallel edges when source locations coincide. Condition~3) of Theorem~\ref{thm:main-equivalence} is equivalent to the maximum flow from $q$ to $t_0$ in this augmented network being $R$, and thus requires one maximum-flow computation. The following corollary shows that \eqref{eq:destination-connectivity} can be checked using $|T|-1$ additional pairwise maximum-flow computations in the original network, all with the same fixed destination $t_0$.

\begin{corollary}
\label{cor:maxflow-test}
Fix a destination $t_0 \in T$ in the undirected network $(G, c)$. Condition~\eqref{eq:destination-connectivity} holds if and only if
\begin{equation}
\operatorname{maxflow}_G(t_0, t) \ge R \quad \text{for all } t \in T \setminus \{t_0\}. \label{eq:root-connectivity}
\end{equation}
Consequently, at most $|T|$ maximum-flow computations suffice to verify \eqref{eq:destination-connectivity} and, when it holds, decide coding or fractional-routing feasibility of the multiple-unicast demands $\mathcal{D}$.
\end{corollary}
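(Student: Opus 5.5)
The corollary has two parts: the equivalence between \eqref{eq:destination-connectivity} and \eqref{eq:root-connectivity}, and the count of maximum-flow computations. I will prove the equivalence from the triangle-type inequality for edge connectivity in undirected networks, and then read off the count using Theorem~\ref{thm:main-equivalence}.

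The direction \eqref{eq:destination-connectivity}$\Rightarrow$\eqref{eq:root-connectivity} is immediate, since \eqref{eq:root-connectivity} only restricts the pairs in \eqref{eq:destination-connectivity} to those containing $t_0$. For the converse, take distinct $u, v \in T$. If one of them equals $t_0$, the bound is given directly by \eqref{eq:root-connectivity}. Otherwise I will show
\begin{equation*}
\operatorname{maxflow}_G(u, v) \ge \min\{\operatorname{maxflow}_G(u, t_0), \operatorname{maxflow}_G(t_0, v)\}.
\end{equation*}
By the max-flow min-cut theorem, $\operatorname{maxflow}_G(u, v) = c(\delta_G U)$ for some $U \subseteq V$ with $u \in U$ and $v \notin U$. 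Either $t_0 \notin U$, in which case $U$ separates $u$ from $t_0$ and $c(\delta_G U) \ge \operatorname{maxflow}_G(u, t_0) \ge R$. Or $t_0 \in U$, in which case $U$ separates $t_0$ from $v$ and $c(\delta_G U) \ge \operatorname{maxflow}_G(t_0, v) \ge R$. Cuts in undirected graphs are symmetric, so $U$ separating two nodes does not depend on their order. This proves \eqref{eq:destination-connectivity}.

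For the counting statement, \eqref{eq:root-connectivity} needs exactly $|T|-1$ maximum-flow computations, all in $(G,c)$ with one endpoint $t_0$. If some value is below $R$, condition~\eqref{eq:destination-connectivity} fails and we stop; the corollary only claims a decision when the condition holds. If all values are at least $R$, Theorem~\ref{thm:main-equivalence} applies with this fixed $t_0$. Coding feasibility and fractional-routing feasibility of $\mathcal{D}$ are then both equivalent to condition~3). As noted after the theorem, condition~3) is equivalent to the maximum flow from the added super-source $q$ to $t_0$ being $R$ in the augmented network, which is one more computation. The total is at most $|T|$.

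I expect no real obstacle. The only care points are the degenerate cases. When $|T|=1$, condition~\eqref{eq:destination-connectivity} holds vacuously and a single computation suffices. When several prescribed destinations coincide, the pairs are over the distinct elements of the set $T$, so the argument is unchanged.
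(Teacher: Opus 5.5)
Your proof is correct and follows essentially the same route as the paper. The key observation in both is that a cut separating two destinations must also separate $t_0$ from one of them: you apply it to a minimum $u$--$v$ cut, while the paper applies it to an arbitrary cut separating $T$ and then invokes max-flow min-cut. The count is also the same, namely $|T|-1$ max-flow computations rooted at $t_0$ plus one computation from the super-source $q$ to $t_0$, combined with Theorem~\ref{thm:main-equivalence}.
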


\begin{proof}
Every cut separating $T$ separates $t_0$ from some $t \in T \setminus \{t_0\}$, so \eqref{eq:root-connectivity} ensures that its capacity is at least $R$. The max-flow min-cut theorem then gives \eqref{eq:destination-connectivity}; the reverse implication is immediate. Combining one maximum-flow computation from $q$ to $t_0$ with Theorem~\ref{thm:main-equivalence} shows that at most $|T|$ maximum-flow computations are required in total.
\end{proof}

\subsection{The Langberg--M\'edard Conjecture}
For a directed network, applying Theorem~\ref{thm:main-equivalence} to its underlying undirected network gives the following result, which allows unequal source rates.

\begin{theorem}
\label{thm:directed-general}
Let a directed network $D$ have nonnegative real arc capacities, a nonempty destination set $T \subseteq V$, and prescribed demands $(s_i, t_i, r_i)$ for $i \in [k]$ with $t_i \in T$ and $r_i \ge 0$. Suppose that, for each destination $t \in T$ considered separately, a feasible directed fractional flow in $D$ simultaneously sends rate $r_i$ from every source $s_i$ to $t$. After replacing each arc by an undirected edge of the same capacity, the underlying undirected network $\overline{D}$ admits a simultaneous fractional routing of all prescribed demands $(s_i, t_i, r_i)$.
\end{theorem}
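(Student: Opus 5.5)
The plan is to derive Theorem~\ref{thm:directed-general} from Theorem~\ref{thm:main-equivalence} applied to the undirected network $(\overline{D}, c)$, with the given destination set $T$ and demands $\mathcal{D} = ((s_i,t_i,r_i))_{i\in[k]}$. Since the implication 3)$\Rightarrow$1) of Theorem~\ref{thm:main-equivalence} yields exactly the desired simultaneous fractional routing, it suffices to verify two things in $\overline{D}$. First, the pairwise maximum-flow condition~\eqref{eq:destination-connectivity} must hold for $T$. Second, single-sink routability (condition~3) must hold for some fixed $t_0 \in T$.

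Condition~3 is immediate. Fix any $t_0 \in T$. By hypothesis there is a directed fractional flow in $D$ sending $r_i$ from each $s_i$ to $t_0$, with the total flow on each arc $a$ at most $c_a$. Reading each directed path as an undirected path in $\overline{D}$, the load on each edge $e$ equals the load on its originating arc, which is at most $c_e = c_a$. Parallel edges arising from opposite arcs keep separate capacities, so no conflict occurs. Hence all sources can simultaneously route their rates to $t_0$ in $\overline{D}$.

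For condition~\eqref{eq:destination-connectivity}, take distinct $u, v \in T$ and any $U \subseteq V$ separating them, say $v \in U$ and $u \notin U$. Apply the hypothesis with destination $t = v$. The directed flow carries total value $R=\sum_i r_i$ into $v$, and it originates at sources $s_i$. Sources lying inside $U$ are the problem here, since their flow need not cross the cut. I will therefore swap the roles and use the hypothesis with destination $u \notin U$ instead. Every source $s_i \in U$ must send $r_i$ out of $U$ to $u$, and this flow crosses arcs leaving $U$. Every source $s_i \notin U$, together with the destination-$v$ flow, gives $r_i$ units entering $U$. The two flows use disjoint arc sets: arcs leaving $U$ versus arcs entering $U$. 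Summing, $c(\delta_{\overline{D}} U) \ge \sum_{s_i\in U} r_i + \sum_{s_i\notin U} r_i = R$. The max-flow min-cut theorem then gives $\operatorname{maxflow}_{\overline{D}}(u,v) \ge R$.

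The main subtlety, and the step I expect to need the most care, is this cut argument. It must split the sources according to the side of the cut they lie on and use two different destination flows on directionally disjoint arc sets. The undirected edge capacity counts arcs in both directions once each, which is what makes the bound work. Degenerate cases also need checking: a source coinciding with $u$ or $v$ contributes trivially, and zero-rate demands contribute nothing. Once both conditions are established, Theorem~\ref{thm:main-equivalence} completes the proof.
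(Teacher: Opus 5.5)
Your proposal is correct and follows the paper's proof essentially step for step. You transfer the single-sink routing to $t_0$ from $D$ to $\overline{D}$. You then verify \eqref{eq:destination-connectivity} by bounding $c(\delta_D^+(U))$ with the flow toward the destination outside $U$ and $c(\delta_D^-(U))$ with the flow toward the destination inside $U$, add the two bounds, and invoke Theorem~\ref{thm:main-equivalence}. One phrase to tighten: the two flows need not themselves use disjoint arcs; what matters is that the two capacity bounds concern the disjoint sets $\delta_D^+(U)$ and $\delta_D^-(U)$, so their capacities add to $c(\delta_{\overline{D}} U)$.
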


Taking equal source rates and using the flow characterization of multiple-multicast coding in Section~\ref{subsec:multicast} yields the Langberg--M\'edard conjecture.

\begin{corollary}[The Langberg--M\'edard conjecture]
\label{cor:lm-conjecture}
Let $D$ be a directed network with sources $s_1, \dots, s_k$ and destinations $t_1, \dots, t_k$. If a common source rate $\rho \ge 0$ is achievable by multiple-multicast coding in $D$, then $\overline{D}$ admits a simultaneous fractional routing of rate $\rho$ from $s_i$ to $t_i$ for every $i \in [k]$, with each arc replaced by an undirected edge of the same capacity. In particular, the underlying undirected graph of every strongly reachable unit-capacity $k$-pair network admits a simultaneous unit-rate fractional routing of the paired demands.
\end{corollary}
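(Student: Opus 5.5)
The statement immediately above is Corollary~\ref{cor:lm-conjecture}. I will derive it from Theorem~\ref{thm:directed-general}, specialised to equal rates. The argument has two steps.

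First, I translate the coding hypothesis into flows. Suppose the common rate $\rho$ is achievable by multiple-multicast coding in $D$. By the characterization in Section~\ref{subsec:multicast}, for every destination $t_j$ considered separately there is a feasible directed fractional flow in $D$ that simultaneously sends $\rho$ units from each $s_i$ to $t_j$. This characterization rests on the super-source construction with node $z$ and the max-flow multicast theorem of Li et al. and Koetter--M\'edard.

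Some care is needed because achievability is defined as a limit of rates approaching $\rho$. The cut-set bound gives, for each rate $\rho' < \rho$ that is achieved, a flow value at least $k\rho'$ from $z$ to $t_j$. The maximum flow is a minimum of finitely many cut capacities, so it is a closed condition, and letting $\rho' \to \rho$ gives a flow of value $k\rho$. The case $\rho = 0$ is trivial.

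Second, I apply Theorem~\ref{thm:directed-general}. Set $T = \{t_1, \dots, t_k\}$ and $r_i = \rho$ for every $i$. The per-destination flows from the first step are exactly the hypothesis of Theorem~\ref{thm:directed-general}. Its conclusion is that $\overline{D}$ admits a simultaneous fractional routing of the demands $(s_i, t_i, \rho)$, which is the main claim.

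For the unit-capacity statement, the strong reachability paths of Definition~\ref{def:strong} directly give, for each $t_j$, a directed flow sending one unit from every $s_i$ to $t_j$. These paths are arc-disjoint, so every arc carries at most one unit, within its capacity of $1$. Applying Theorem~\ref{thm:directed-general} with $r_i = 1$ then gives the unit-rate routing.

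The only delicate point is passing from achievability as a limit to an exact flow of value $\rho$. Closedness of the cut condition handles this. Everything else is a direct substitution into Theorem~\ref{thm:directed-general}.
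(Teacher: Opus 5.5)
Your proof is correct and follows the paper's argument. The paper likewise uses the super-source characterization of Section~\ref{subsec:multicast} to convert multiple-multicast achievability into per-destination directed flows, applies Theorem~\ref{thm:directed-general} with $T=\{t_1,\dots,t_k\}$ and $r_i=\rho$, and for the unit-rate case puts one unit of flow on each arc-disjoint path; your closedness remark about rates approaching $\rho$ only makes explicit what the paper includes in that characterization.
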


\subsection{Proof Overview}
In Theorem~\ref{thm:main-equivalence}, routing is a special case of coding, and coding feasibility implies single-sink routability by the cut-set bound, the pairwise maximum-flow condition \eqref{eq:destination-connectivity}, and the max-flow min-cut theorem. The core of the proof is the implication from single-sink routing to multiple-unicast routing: showing that single-sink routability to $t_0$ and \eqref{eq:destination-connectivity} suffice to establish the fractional-routing dual inequality \eqref{eq:duality} for the prescribed demands $(s_i, t_i, r_i)$.

Section~\ref{sec:core} first proves this metric inequality for simplified undirected networks containing only sources and destinations, with no edges between sources. Using Mader's splitting-off theorem to eliminate all sources while preserving pairwise edge connectivities among destinations, and combining the preserved connectivity with the bound in Lemma~\ref{lem:three-ball}, we obtain a cost inequality in which the added auxiliary-edge costs cancel from both sides.

Section~\ref{sec:reduction} then reduces a general undirected network to this simplified structure for each fixed rational shortest-path pseudometric via Kir\'aly's parsimonious cut-covering theorem, preserving single-sink routability and the pairwise maximum-flow condition without increasing the total capacity cost, and completes the proof of Theorem~\ref{thm:main-equivalence} by rational approximation. Finally, Section~\ref{sec:strong_reachability} proves Theorem~\ref{thm:directed-general} for directed networks by applying Theorem~\ref{thm:main-equivalence} to their underlying undirected networks, then uses the flow characterization of multiple-multicast coding from Section~\ref{subsec:multicast} to derive Corollary~\ref{cor:lm-conjecture}.

\section{Multiple-Unicast Routing When Sources Are Adjacent Only to Destinations}\label{sec:core}

This section considers an undirected network containing only distinct sources $S = \{s_1, \dots, s_k\}$ and a disjoint nonempty destination set $T = \{v_1, \dots, v_m\}$. Each source $s_i$ is adjacent only to destinations and has total incident capacity equal to its sending rate $r_i \ge 0$, with no edges between sources. Under the cut condition below, we prove that all sources can simultaneously send their respective rates $r_i$ to their prescribed destinations $t_i \in T$. For each fixed pseudometric, Section~\ref{sec:reduction} reduces the proof for a general network to this structure while preserving the cut conditions and without increasing the total capacity cost.

Collect the capacities of edges $s_i u$ in the matrix $A = (A_{iu}) \in \mathbb{R}_+^{k \times m}$, and those between distinct destinations in the vector $B = (B_{uv}) \in \mathbb{R}_+^{\binom{T}{2}}$, assigning zero to absent edges. Write $(T, B)$ for the undirected network with node set $T$ and capacity $B_{uv}$ on edge $uv$, retaining only positive-capacity edges. The $i$th row of $A$ sums to $r_i$, and the total source rate is $R = \sum_{i=1}^k r_i$. Source $s_i$ first sends $A_{iu}$ units along $s_i u$ to $u$. The remaining task is to route all this traffic simultaneously in the destination network $(T, B)$ to the respective prescribed destinations $t_i$. For $J \subseteq T$, set $A_i(J) = \sum_{u \in J} A_{iu}$.

Fix a nonempty proper subset $J$ of $T$ and consider cuts having exactly $J$ as the destinations on one side. Source $s_i$ contributes $A_i(J)$ to the cut capacity if placed on the other side, and $r_i - A_i(J)$ if placed on this side. Since sources are not adjacent, these choices can be made independently. The minimum capacity of such a cut is
\begin{equation}
B(\delta_T J) + \sum_{i=1}^k \min\{A_i(J),\, r_i - A_i(J)\}. \label{eq:core-minimum-cut}
\end{equation}
Requiring every cut separating destinations to have capacity at least $R$ is therefore equivalent to
\begin{equation}
B(\delta_T J) \ge \sum_{i=1}^k \max\{A_i(J),\, r_i - A_i(J)\} \label{eq:core-cut}
\end{equation}
for every nonempty proper subset $J$ of $T$, using $r_i - \min\{a, r_i - a\} = \max\{a, r_i - a\}$. We assume \eqref{eq:core-cut} throughout this section.

Figure~\ref{fig:destination-cut} shows the two placements of a source for a fixed destination cut. Since no edge joins two sources, each source can independently be placed on the side giving the smaller cut contribution.

\begin{figure}[!htbp]
\centering
\begin{tikzpicture}[x=1cm,y=1cm,font=\fontsize{8}{10}\selectfont,
 terminal/.style={circle,fill=black,inner sep=1.2pt},
 source/.style={circle,draw=black,fill=white,inner sep=1.6pt},
 inside/.style={black!55,line width=.45pt},
 crossing/.style={black,line width=.65pt}]
\foreach \shift/\side in {0/0,4.15/1}{
\begin{scope}[xshift=\shift cm]
 \draw[densely dashed,black!45,line width=.4pt] (0,-.85)--(0,1.7);
 \node at (-1,1.85) {$U$};
 \node at (1,1.85) {$V\setminus U$};
 \draw[black!45,line width=.4pt] (-1,0) ellipse (.58 and .4);
 \draw[black!45,line width=.4pt] (1,0) ellipse (.58 and .4);
 \node at (-1,.08) {$J$};
 \node at (1,.08) {$T\setminus J$};
 \node[terminal] (l) at (-1,-.18) {};
 \node[terminal] (r) at (1,-.18) {};
 \draw[crossing] (l) to[bend right=22] (r);
 \node[fill=white,inner sep=1pt] at (0,-.64) {$B(\delta_T J)$};
 \ifnum\side=0
  \node[source] (s) at (-1,1.3) {$s_i$};
  \draw[inside] (s)--(-1,.4);
  \node[anchor=east,inner sep=2pt] at (-1,.82) {$A_i(J)$};
  \draw[crossing] (s)--(1,.4);
  \node[anchor=south west,inner sep=1.5pt] at (.2,1.15) {$r_i-A_i(J)$};
  \node at (0,-1.12) {(a) $s_i\in U$};
 \else
  \node[source] (s) at (1,1.3) {$s_i$};
  \draw[inside] (s)--(1,.4);
  \node[anchor=west,inner sep=2pt] at (1,.82) {$r_i-A_i(J)$};
  \draw[crossing] (s)--(-1,.4);
  \node[fill=white,inner sep=1.5pt] at (-.48,1.1) {$A_i(J)$};
  \node at (0,-1.12) {(b) $s_i\notin U$};
 \fi
\end{scope}}
\end{tikzpicture}
\caption{Two placements of a source in \eqref{eq:core-cut}. Ovals represent destination subsets, edge labels give aggregate capacities, and the vertical dashed line marks the cut. The source contributes $r_i-A_i(J)$ in (a) and $A_i(J)$ in (b). Minimizing each source contribution and adding $B(\delta_T J)$ gives the minimum capacity of a cut with this destination partition.}
\label{fig:destination-cut}
\end{figure}

\begin{samepage}
\begin{theorem}[Routing-cost inequality between destinations]
\label{thm:destination-routing}
Suppose the capacities $A$ and $B$ satisfy \eqref{eq:core-cut}. Then, for every pseudometric $d : T \times T \to \mathbb{R}_+$ on the destination set $T$ and every choice of destinations $t_1, \dots, t_k \in T$,
\begin{equation}
B \cdot d \ge \sum_{i=1}^k \sum_{u \in T} A_{iu} d(u, t_i). \label{eq:core-metric}
\end{equation}
Consequently, the demands $u \to t_i$, for $i \in [k]$ and $u \in T$, admit a simultaneous fractional routing at rates $A_{iu}$ in the destination network $(T, B)$, using only nodes in $T$.
\end{theorem}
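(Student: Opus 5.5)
The plan is to prove the metric inequality \eqref{eq:core-metric} by eliminating the sources one at a time with Mader's splitting-off theorem, and then to derive the routing statement from Lemma~\ref{lem:duality}.

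\textbf{Setting up the splitting.} Form the network $H$ on $S \cup T$ whose edges are the source edges $s_i u$ with capacity $A_{iu}$ and the destination edges $uv$ with capacity $B_{uv}$. By the derivation of \eqref{eq:core-minimum-cut}, condition \eqref{eq:core-cut} says exactly that $\operatorname{maxflow}_H(u,v) \ge R$ for all distinct $u, v \in T$.

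I will first assume all capacities are rational. Scaling then makes them even integers, so every source has even degree $r_i$. Sources are never cut vertices in the relevant sense, since they are adjacent only to destinations.

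\textbf{Eliminating the sources.} I will apply Mader's theorem (Lau--Yung form) at each source $s_i$ in turn. Each application completely splits off $s_i$ while preserving all local edge connectivities among the remaining nodes. Splitting off $s_i$ produces a symmetric nonnegative array $x^{(i)}_{uw}$ on pairs of $T$ (pairs with $u = w$ are loops and are dropped). It satisfies
\[
\textstyle\sum_{w} x^{(i)}_{uw} = A_{iu}.
\]
After all sources are eliminated, the graph $(T, B + C)$, where $C = \sum_i x^{(i)}$, has $\operatorname{maxflow}(u,v) \ge R$ for all distinct $u, v \in T$. Since distinct sources share no edges, the splittings at different sources do not interfere.

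\textbf{Reducing to a slack inequality.} For a split pair $(u,w)$ at $s_i$, write
\[
d(u,t_i) + d(w,t_i) = d(u,w) + 2\,\gamma_i(u,w), \qquad \gamma_i(u,w) = (u\mid w)_{t_i} \ge 0 .
\]
The right side of \eqref{eq:core-metric} then equals $C \cdot d + \sum_i \sum_{\{u,w\}} x^{(i)}_{uw}\, 2\gamma_i(u,w)$. Adding $C \cdot d$ to both sides, the cost of the added edges cancels, and \eqref{eq:core-metric} becomes
\[
(B + C)\cdot d \;\ge\; C\cdot d + 2\sum_i \sum_{\{u,w\}} x^{(i)}_{uw}\,\gamma_i(u,w),
\]
that is,
\[
B \cdot d \;\ge\; 2\sum_i \sum_{\{u,w\}} x^{(i)}_{uw}\,\gamma_i(u,w).
\]

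\textbf{Main obstacle: controlling the Gromov-product terms.} I expect this step to be the hardest. My first approach is a ball-cut decomposition. For each $i$, write $\gamma_i(u,w) \le \min\{d(u,t_i), d(w,t_i)\}$ as an integral over radii $\rho$ of the indicator that the ball $J = \{v : d(t_i, v) \le \rho\}$ excludes both $u$ and $w$. In the same way, charge $B \cdot d$ through the cuts $\delta_T J$ of balls around the three relevant destinations $t_i$, $u$, $w$.

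The needed three-ball statement is the following. Consider the union of balls around a triple of destinations. If the $C$-edges $uw$ with $u, w$ outside the ball around $t_i$ have total weight $y$, then these edges contribute nothing to $\delta J$. The $B$-part of the cut must therefore supply $R$ minus the $C$-crossing capacity, and the $C$-crossing capacity is at most $R - 2y$ by the degree counts $\sum_u A_{iu} = r_i$. Integrating over radii gives the factor $2$.

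If this pointwise argument fails, my fallback is to take $J = T \setminus \{\text{far points}\}$ directly in \eqref{eq:core-cut}, using $\max\{a, r_i - a\} \ge r_i - a$. This charges $B$ by $\sum_i (r_i - A_i(J))$, which dominates twice the split weight between points outside $J$.

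\textbf{Closing steps.} Real capacities follow by rational approximation and continuity of both sides of \eqref{eq:core-metric}.

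For the routing consequence, take any $\ell \ge 0$ on the edges of $(T,B)$ and set $d = d_\ell$. Inequality \eqref{eq:core-metric} is then exactly \eqref{eq:duality} for the demands $u \to t_i$ at rates $A_{iu}$, so Lemma~\ref{lem:duality} applies. Lemma~\ref{lem:duality} requires the positive-rate demand endpoints to lie in one connected component of $(T,B)$. This follows from \eqref{eq:core-cut}. If $J$ is a union of components of $(T,B)$, then $B(\delta_T J) = 0$, which forces $A_i(J) = 0$ and $r_i - A_i(J) = 0$ for every $i$, so no positive-rate demand straddles two components.
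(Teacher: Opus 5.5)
There is a genuine gap, and it sits at the step you call the cancellation. You correctly find $Q:=\sum_i\sum_u A_{iu}d(u,t_i)=C\cdot d+2\sum_i\sum_{\{u,w\}}x^{(i)}_{uw}\gamma_i(u,w)$, up to the diagonal terms $x^{(i)}_{uu}d(u,t_i)$ that you drop. But adding $C\cdot d$ to both sides of $B\cdot d\ge Q$ gives $(B+C)\cdot d\ge 2\,C\cdot d+2\sum x\gamma$, not $C\cdot d+2\sum x\gamma$.

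Your ``slack inequality'' $B\cdot d\ge 2\sum x\gamma$ has therefore silently lost the term $C\cdot d$. It is strictly weaker than \eqref{eq:core-metric} whenever $C\cdot d>0$, and proving it would only give $(B+C)\cdot d\ge Q$. The auxiliary-edge cost does not cancel by bookkeeping. The honest target is $B\cdot d\ge C\cdot d+2\sum x\gamma$, so $C\cdot d$ must actually be paid for. Your sketch uses the connectivity of $(T,B+C)$ only to lower-bound the $B$-part of individual ball cuts, never to pay for $C\cdot d$ itself.

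Moreover, even the weaker inequality is not established once targets differ. If all targets equal one node $t$, integrating \eqref{eq:core-cut} over balls around $t$ already gives $B\cdot d\ge\int_0^\infty\sum_i A_i(\{v:d(t,v)\ge\rho\})\,d\rho=Q$, with no splitting needed. With distinct targets, however, each ball family around a $t_i$ charges the whole of $B\cdot d$. Your sketch gives no way to share that capacity among several centers.

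The missing idea is how the paper pays for $C\cdot d$. After complete splitting, every nontrivial cut of $K=B+C$ has capacity at least $R$. Lemma~\ref{lem:three-ball} (three pairwise disjoint balls whose radii sum pairwise to the center distances) then gives $(B+C)\cdot d\ge R\,M(d)$, where $M(d)$ is the largest half-perimeter \eqref{eq:perimeter} over triples in $T$. On the other side, $\pi_d(u,w,t_i)=d(u,w)+\gamma_i(u,w)$, so
\[
Q+C\cdot d=\sum_{i=1}^k\Bigl(\sum_{\{u,w\}\in\binom{T}{2}}2x^{(i)}_{uw}\,\pi_d(u,w,t_i)+\sum_{u\in T}x^{(i)}_{uu}\,\pi_d(u,u,t_i)\Bigr)\le R\,M(d).
\]
The inequality holds because the weights at $s_i$ total $\sum_u\sum_w x^{(i)}_{uw}=\sum_u A_{iu}=r_i$ and each half-perimeter is at most $M(d)$. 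Chaining the two bounds gives $(B+C)\cdot d\ge Q+C\cdot d$, and only now does $C\cdot d$ genuinely cancel. This crude bound through a single maximizing triple is exactly what handles all the different targets $t_i$ at once.

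Two smaller points. First, Mader's theorem needs $s_i$ to have no incident bridge. That follows from every edge multiplicity being even after scaling by $2N$, not from sources being adjacent only to destinations. Second, the passage to real capacities must use rational approximations that preserve \eqref{eq:core-cut}, for instance $A^{(n)}\le A$ and $B^{(n)}\ge B$; an arbitrary approximation need not.
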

\end{samepage}

To establish inequality \eqref{eq:core-metric} for the destination capacities $B$, we first split off the incident edges at each source to eliminate all sources, obtaining an auxiliary network on the destination set $T$ alone in which the minimum cut between any two destinations remains at least $R$. We then use this destination connectivity to lower-bound the total capacity cost of the auxiliary network and cancel the cost of the edges added by splitting from both sides, leaving \eqref{eq:core-metric} for the original destination network $(T, B)$.

\subsection{Source Elimination Preserving Connectivity}
We first consider rational $A$ and $B$. Choose an integer $N \ge 1$ such that $NA$ and $NB$ are integral, multiply all edge capacities by $2N$, and replace each edge by that many parallel unit edges; then $2N A_{iu}$ unit edges join source $s_i$ to destination $u$, and $s_i$ has $2N r_i$ incident unit edges in total. To eliminate source $s_i$, we pair up its $2N r_i$ incident unit edges into $N r_i$ pairs by splitting off: pairing a unit edge $s_i u$ with a unit edge $s_i v$ deletes both edges and adds a unit edge $uv$ between $u$ and $v$ when $u \ne v$, or simply deletes the two edges when $u = v$. Write $N H_i(u, v)$ for the number of unit-edge pairs formed at $s_i$ between $s_i u$ and $s_i v$, where $H_i(u, v) = H_i(v, u) \ge 0$ and $u = v$ is allowed. Fix a total order on $T$; the notation $\sum_{u \le v}$ counts each unordered pair in $T$ once, including pairs with $u = v$.

\begin{lemma}[Pairing weights]
\label{lem:pairing-weights}
If $A$ and $B$ are rational and satisfy \eqref{eq:core-cut}, then for each source index $i \in [k]$ there are nonnegative rational pairing weights $H_i$ with the following properties. For fixed $i$, summing over all unordered destination pairs gives
\begin{equation}
\sum_{\substack{u, v \in T \\ u \le v}} H_i(u, v) = r_i. \label{eq:H-mass}
\end{equation}
Fix also a destination $u \in T$. Summing over the remaining destinations $v \in T \setminus \{u\}$ gives
\begin{equation}
\sum_{v \in T \setminus \{u\}} H_i(u, v) + 2 H_i(u, u) = 2 A_{iu}. \label{eq:H-degree}
\end{equation}
Equation \eqref{eq:H-degree} holds for every $i \in [k]$ and every $u \in T$. It counts the total weight of occurrences of endpoint $u$ in the pairs; a pair $(u, u)$ contains two such occurrences and therefore contributes twice its weight.

For distinct destinations $u, v \in T$, define capacities
\begin{equation}
K_{uv} = B_{uv} + \frac{1}{2} \sum_{i=1}^k H_i(u, v). \label{eq:K}
\end{equation}
Write $(T, K)$ for the undirected network with node set $T$ and capacity $K_{uv}$ between distinct destinations $u, v$, retaining only positive-capacity edges. This network satisfies
\begin{equation}
K(\delta_T J) \ge R \quad (\varnothing \ne J \subsetneq T). \label{eq:Kcut}
\end{equation}
\end{lemma}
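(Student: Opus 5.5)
We work in the scaled unit-edge multigraph $G_N$ described just before the statement. It has node set $S \cup T$. Each source $s_i$ is joined to each destination $u$ by $2NA_{iu}$ parallel unit edges, and each pair $u,v\in T$ by $2NB_{uv}$ unit edges. The plan is to apply Mader's splitting-off theorem, in the Lau--Yung form stated in the appendix, to each source in turn, and then read off $H_i$ from the resulting pairings.

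\textbf{Step 1: connectivity before splitting.} By \eqref{eq:core-cut} and the cut computation \eqref{eq:core-minimum-cut}, every cut of $(S\cup T, A, B)$ separating two destinations has capacity at least $R$. Hence in $G_N$ every such cut has at least $2NR$ unit edges, that is, $\lambda_{G_N}(u,v)\ge 2NR$ for all distinct $u,v\in T$. Each source has even degree $2Nr_i$.

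\textbf{Step 2: complete splitting at each source.} Mader's theorem, as formulated by Lau and Yung, says the following. Let $s$ be a node of even degree that is not incident to a cut edge, or more generally let $s$ be such that $s$ is not a cut node of degree $3$. Then the edges at $s$ can be completely split off in pairs while preserving $\lambda(x,y)$ for all $x,y\ne s$.

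The degree $2Nr_i$ is even. Moreover, $s_i$ has no edges to other sources, so the exceptional configuration cannot arise once we check the appendix hypothesis; this check is the only place where some care is needed. If $s_i$ is incident to only one destination $u$, we simply pair its edges as loops at $u$ and delete them. Removing $s_i$ together with these edges does not affect connectivities among the other nodes, because every path through $s_i$ enters and leaves via $u$.

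We eliminate $s_1,\dots,s_k$ sequentially. Each splitting step preserves the pairwise local edge-connectivities among the remaining nodes, and in particular those among $T$. After the final step we obtain a multigraph on $T$ with $\lambda(u,v)\ge 2NR$ for all distinct $u,v\in T$. Splitting at later sources is unaffected by earlier ones, since earlier splittings only add edges inside $T$ and never touch $s_j$ for $j$ not yet processed.

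\textbf{Step 3: defining $H_i$ and checking the identities.} Let $NH_i(u,v)$ denote the number of pairs formed at $s_i$ that consist of an $s_iu$-edge and an $s_iv$-edge. These counts are integers, so $H_i$ is nonnegative and rational.

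For \eqref{eq:H-mass}: there are $Nr_i$ pairs in total, which gives $\sum_{u\le v}NH_i(u,v)=Nr_i$.

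For \eqref{eq:H-degree}: count the $2NA_{iu}$ edge-ends at $u$. A pair $\{u,v\}$ with $v\ne u$ uses one such edge, and a pair $\{u,u\}$ uses two. Therefore $\sum_{v\ne u}NH_i(u,v)+2NH_i(u,u)=2NA_{iu}$.

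For \eqref{eq:Kcut}: the final graph on $T$ has exactly $2NB_{uv}+\sum_i NH_i(u,v)=2NK_{uv}$ unit edges between each pair of distinct $u,v$. Every nonempty proper $J\subsetneq T$ separates some pair $u,v$, so $2NK(\delta_TJ)\ge\lambda(u,v)\ge 2NR$. Dividing by $2N$ gives \eqref{eq:Kcut}.

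The main obstacle is verifying that the hypotheses of the appendix version of Mader's theorem hold at each stage. The delicate case is a source of degree $3$, or a source attached through a cut edge. The factor $2$ in the scaling rules out odd degree, and the single-neighbour case is handled separately as above. If the appendix statement also excludes sources whose neighbourhood creates a bridge, the fallback is the same: detach those edges as loops, since they carry no connectivity between distinct destinations.
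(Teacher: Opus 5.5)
Your route is exactly the paper's: scale by $2N$, completely split off $s_1,\dots,s_k$ in turn via Mader's theorem, let $NH_i(u,v)$ count the pairs formed at $s_i$, and divide by $2N$. Steps~1 and~3 are correct as written, as is your remark that earlier splittings never touch the edges of unprocessed sources.

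The gap is the step you yourself flag as the crux: you never actually verify that $s_i$ has no incident bridge when it is split. The sentence about the ``exceptional configuration'' defers the check rather than making it. The fallback also does not work as stated. A bridge at $s_i$ is a single $s_iu$ edge, so it cannot be paired with a parallel copy and ``detached as a loop''. The claim that such edges carry no connectivity between destinations is not argued either. (Your paraphrase of Mader's hypotheses is garbled; the relevant conditions are $\deg(s)\ne 3$ and no cut edge at $s$. The single-neighbour special case is unnecessary, since pairs $s_iu,s_iu$ are legitimate splits.)

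The missing observation is short, and it is what the paper uses. In the scaled graph every multiplicity is even, so every degree is even. Splitting $s_iu,s_iv$ lowers $\deg s_i$ by $2$ and leaves $\deg u$, $\deg v$ unchanged when $u\ne v$; when $u=v$ it lowers $\deg u$ by $2$. So all degrees stay even throughout. Hence $|\delta U|=\sum_{v\in U}\deg v-2|E(U)|$ is even for every $U$, no cut consists of a single edge, and no bridge ever exists. (More directly: when $s_i$ is processed, its edges are still the original bundles of $2NA_{iu}$ parallel copies, and an edge with a parallel copy is never a bridge.) With this inserted, your argument is complete and coincides with the paper's proof.
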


\begin{proof}
A cut with exactly $J$ as its destinations on one side has capacity at least
\begin{equation}
B(\delta_T J) + \sum_{i=1}^k \min\{A_i(J),\, r_i - A_i(J)\} \ge R, \label{eq:source-cut-bound}
\end{equation}
by \eqref{eq:core-cut}. Hence the minimum cut capacity between any two distinct destinations is at least $R$. After multiplying all capacities by $2N$ and replacing them by parallel unit edges, every edge multiplicity, and therefore every vertex degree, is even, and every cut separating two distinct destinations contains at least $2N R$ unit edges. Figure~\ref{fig:source-splitting} illustrates scaling, expansion into unit edges, and complete splitting with one source and three destinations.

\ifdefined\LMtwocolumn
\begin{figure*}[t]
\else
\begin{figure}[!htbp]
\fi
\centering
\begin{tikzpicture}[x=1cm,y=1cm,font=\fontsize{8}{10}\selectfont,
 vertex/.style={circle,draw,fill=white,inner sep=1.5pt},
 old/.style={black!55,line width=.4pt},
 pair/.style={black,line width=.85pt},
 repeat/.style={black!65,densely dashed,line width=.45pt}]
\foreach \shift/\panel in {0/0,4/1,8/2}{
\begin{scope}[xshift=\shift cm]
 \coordinate (u) at (0,0); \coordinate (v) at (2.2,0);
 \coordinate (w) at (1.1,-1.3); \coordinate (s) at (1.1,1.4);
 \ifnum\panel=0
  \draw[old] (s)--node[left,black,inner sep=2pt] {$3/4$}(u);
  \draw[old] (s)--node[right,black,inner sep=2pt] {$1/4$}(v);
  \draw[old] (u)--node[above,black,inner sep=2pt] {$1/4$}(v);
  \draw[old] (u)--node[left,black,inner sep=2pt] {$1/2$}(w);
  \draw[old] (v)--node[right,black,inner sep=2pt] {$1/2$}(w);
  \node[vertex] at (s) {$s_1$};
  \node at (1.1,-1.95) {(a) \LMfigtext{Rational capacities}{有理容量}};
 \else
  \foreach \ang in {-30,-10,10,30}{
   \draw[old] (u) to[bend left=\ang] (w);
   \draw[old] (v) to[bend left=\ang] (w);
  }
  \ifnum\panel=1
   \foreach \ang in {-10,10}{\draw[old] (u) to[bend left=\ang] (v);}
   \foreach \ang in {-40,-24,-8,8}{\draw[repeat] (s) to[bend left=\ang] (u);}
   \foreach \ang in {24,40}{\draw[pair] (s) to[bend left=\ang] (u);}
   \foreach \ang in {-10,10}{\draw[pair] (s) to[bend left=\ang] (v);}
   \node[vertex] at (s) {$s_1$};
   \node at (1.1,-1.95) {(b) \LMfigtext{Unit edges}{单位边展开}};
  \else
   \foreach \ang in {-30,-10}{\draw[old] (u) to[bend left=\ang] (v);}
   \foreach \ang in {10,30}{\draw[pair] (u) to[bend left=\ang] (v);}
   \node[align=center] at (1.1,1.05) {$s_1u,s_1v\mapsto uv$\\[4pt]$s_1u,s_1u\mapsto\varnothing$};
   \node at (1.1,-1.95) {(c) \LMfigtext{Complete splitting off}{完整分裂}};
  \fi
 \fi
 \foreach \n in {u,v,w}{\node[vertex] at (\n) {$\n$};}
\end{scope}}
\draw[->,line width=.5pt] (2.8,.45)--node[above,inner sep=3pt] {$\times8$}(3.5,.45);
\draw[->,line width=.5pt] (6.8,.45)--(7.5,.45);
\end{tikzpicture}
\caption{Scaling and splitting at $s_1$, with $r_1=R=1$ and $N=4$. (a) Rational capacities. (b) Multiplication by $2N=8$ gives separate unit edges. (c) The two pairs $s_1u,s_1v$ drawn with thick solid lines create two $uv$ edges; the two dashed pairs $s_1u,s_1u$ are deleted. Thin destination edges remain unchanged. Dividing by $8$ gives $K_{uv}=K_{uw}=K_{vw}=1/2$ and $H_1(u,v)=H_1(u,u)=1/2$.}
\label{fig:source-splitting}
\ifdefined\LMtwocolumn
\end{figure*}
\else
\end{figure}
\fi

In a multigraph, the local edge connectivity between two nodes is the minimum number of edges in a cut separating them. Mader's splitting-off theorem (Theorem~\ref{thm:mader-external} in Appendix~\ref{sec:external-results}) permits complete splitting at an even-degree vertex with no incident bridge while preserving all pairwise local edge connectivities between the other vertices. Each split of a pair $s_i u, s_i v$ preserves even degrees, and an even-degree multigraph has no bridge because every cut has even size. Thus the hypotheses continue to hold after every split.

Completely split off the vertices $s_1, \dots, s_k$ in order. Splitting another source creates new edges only between destinations, so every neighbor of $s_i$ is a destination when $s_i$ is split. Complete splitting groups the $2N r_i$ incident unit edges of $s_i$ into $N r_i$ pairs, so
\begin{equation}
\sum_{u \le v} N H_i(u, v) = N r_i, \label{eq:scaled-pair-count}
\end{equation}
and dividing by $N$ gives \eqref{eq:H-mass}. For each destination $u \in T$, all $2N A_{iu}$ unit edges joining $s_i$ to $u$ are used exactly once during splitting: each pair with $v \ne u$ uses one edge $s_i u$, whereas each pair with $u = v$ uses two edges $s_i u$. Hence
\begin{equation}
\sum_{v \in T \setminus \{u\}} N H_i(u, v) + 2N H_i(u, u) = 2N A_{iu}, \label{eq:scaled-pair-degrees}
\end{equation}
and dividing by $N$ gives \eqref{eq:H-degree}.

For distinct destinations $u, v \in T$, the total number of unit edges between $u$ and $v$ after splitting all sources is $2N B_{uv} + \sum_{i=1}^k N H_i(u, v)$. Dividing by $2N$ gives the capacities \eqref{eq:K} of the destination network $(T, K)$, where the diagonal entries $H_i(u, u)$ do not correspond to edges. After splitting, the local edge connectivity between any two distinct destinations remains at least $2N R$. Dividing by $2N$ shows that the minimum cut capacity between any two distinct destinations in $(T, K)$ remains at least $R$. Choosing one destination on each side of any nontrivial cut separating destinations proves \eqref{eq:Kcut}.
\end{proof}

The diagonal entries $H_i(u, u)$ contribute to identity \eqref{eq:H-degree} but have zero contribution to cut capacities and to the total capacity cost. We use this convention in \eqref{eq:K} and in $H_i \cdot d$.

\subsection{From Connectivity to a Routing-Cost Lower Bound}
The preceding step produces an auxiliary destination network $(T, K)$ in which every nontrivial cut has capacity at least $R$. In the splitting at source $s_i$, each pair of destinations $u, v$ together with the target destination $t_i$ forms a triple $(u, v, t_i)$ in $T$: the sum of the target distances $d(u, t_i) + d(v, t_i)$ plus the length $d(u, v)$ of the added auxiliary edge equals the pairwise distance sum $d(u, v) + d(u, t_i) + d(v, t_i)$.

To relate this three-point distance sum to the total capacity cost $K \cdot d$ of the auxiliary network, we consider any three destinations $a, b, c \in T$ and place three pairwise disjoint open metric balls centered at $a, b, c$ with radii $\alpha_1, \alpha_2, \alpha_3$ chosen so that the radii of any two balls sum to the distance between their centers. The sum of the three radii is then
\begin{equation}
\pi_d(a, b, c) = \frac{d(a, b) + d(a, c) + d(b, c)}{2}, \label{eq:perimeter}
\end{equation}
where $a, b, c \in T$ may repeat. For each positive-radius open ball considered individually, the set of destinations inside that ball contains its own center, while the relation between the radii and the center distances excludes the other two centers; taking the destinations inside that single ball as one side of a cut and the destinations outside it as the other side therefore yields a nontrivial cut of $T$ of capacity at least $R$. Integrating these cut capacities across the radii of the three balls bounds $K \cdot d$ from below by $R \pi_d(a, b, c)$. A zero-radius open ball is empty and contributes zero to the integral. Let $M(d) = \max_{a, b, c \in T} \pi_d(a, b, c)$.

\begin{lemma}[Three-ball bound]
\label{lem:three-ball}
Suppose every nontrivial cut of the destination network $(T, K)$ has capacity at least $\lambda \ge 0$. Then, for every pseudometric $d$ on $T$,
\begin{equation}
K \cdot d \ge \lambda M(d). \label{eq:balls-bound}
\end{equation}
\end{lemma}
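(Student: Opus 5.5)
Fix a pseudometric $d$ on $T$ and three destinations $a,b,c\in T$ attaining $M(d)$; repetitions are allowed. The plan is to exhibit three pairwise disjoint open balls around $a,b,c$ and integrate cut capacities over their radii. Set
\[
\alpha_1=\pi_d(a,b,c)-d(b,c),\qquad \alpha_2=\pi_d(a,b,c)-d(a,c),\qquad \alpha_3=\pi_d(a,b,c)-d(a,b).
\]
By the triangle inequality each $\alpha_j\ge 0$. The radii satisfy $\alpha_1+\alpha_2=d(a,b)$, $\alpha_1+\alpha_3=d(a,c)$, $\alpha_2+\alpha_3=d(b,c)$, and $\alpha_1+\alpha_2+\alpha_3=\pi_d(a,b,c)$. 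For a center $x$ and $0<\rho\le\alpha_x$, write $J_x(\rho)=\{w\in T: d(x,w)<\rho\}$. This set contains $x$.

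Next I check that it is a proper subset of $T$. For the ball at $a$, $d(a,b)=\alpha_1+\alpha_2\ge\rho$, so $b\notin J_a(\rho)$. If $\alpha_1>0$ is attained then $J_a(\rho)$ is nonempty and proper, so the hypothesis gives $K(\delta_T J_a(\rho))\ge\lambda$. The same holds at $b$ and $c$. A zero radius contributes nothing.

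The key layer-cake identity is that, for each edge $uw$,
\[
\int_0^{\alpha_x}\mathbf 1[uw\in\delta_T J_x(\rho)]\,d\rho
\]
equals the length of the interval of $\rho\in(0,\alpha_x]$ lying between $d(x,u)$ and $d(x,w)$, in the half-open sense. That length is at most $|d(x,u)-d(x,w)|\le d(u,w)$. To avoid triple counting, I need the three intervals to be disjoint contributions summing to at most $d(u,w)$; this is the main obstacle. I would handle it by showing that, for a fixed edge $uw$, the three sets
\[
I_x(uw)=\{\rho\in(0,\alpha_x]: \text{exactly one of } u,w \text{ lies in } J_x(\rho)\}
\]
have total measure at most $d(u,w)$.

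To do this, I would parametrize a ``path'' $u\to w$. Let $p_x=\min(d(x,u),\alpha_x)$ and $q_x=\min(d(x,w),\alpha_x)$. The measure of $I_x$ is $|p_x-q_x|$, and I need $\sum_x|p_x-q_x|\le d(u,w)$. Define $\phi(v)=(\min(d(a,v),\alpha_1),\min(d(b,v),\alpha_2),\min(d(c,v),\alpha_3))$. Each coordinate is $1$-Lipschitz, but the $\ell_1$ sum is what is needed. Disjointness of the balls gives that at most one coordinate is below its cap for any $v$: if $d(a,v)<\alpha_1$ and $d(b,v)<\alpha_2$, then $d(a,b)<\alpha_1+\alpha_2$, a contradiction. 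So for each point at most one coordinate is uncapped. If $u$ and $w$ are uncapped in the same coordinate $x$, only $I_x$ contributes, with measure at most $d(u,w)$. If $u$ is uncapped in $x$ and $w$ in $y\ne x$, the contribution is
\[
(\alpha_x-d(x,u))+(\alpha_y-d(y,w)) \le \alpha_x+\alpha_y-d(x,y)+d(u,w)=d(u,w),
\]
using $d(x,y)\le d(x,u)+d(u,w)+d(w,y)$. Repeated centers such as $a=b$ are covered, since then $\alpha_1+\alpha_2=0$.

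Finally, multiply by $K_{uw}$ and sum over edges to get
\[
K\cdot d\ge\sum_x\int_0^{\alpha_x}K(\delta_T J_x(\rho))\,d\rho\ge\lambda(\alpha_1+\alpha_2+\alpha_3)=\lambda M(d).
\]
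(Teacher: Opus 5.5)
Your proof is correct and follows the paper's argument. It uses the same radii $\alpha_p$ and the same open balls: your radius $\rho$ is the paper's $\alpha_p - z$, and your $\alpha_x - p_x$ is the paper's $f_p$. It also uses the same layer-cake identity and the same key bound $\sum_x |p_x - q_x| \le d(u,w)$. The paper states that bound as the pseudometric $\sum_p |f_p(\cdot)-f_p(\cdot)|$ being dominated by $d$, and proves it with the same case split on ball disjointness and the different-ball triangle inequality.

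You leave one case implicit: one endpoint inside a ball and the other outside all three. It follows immediately from the $1$-Lipschitz property of the single uncapped coordinate, as in the paper.
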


\begin{proof}
Fix a triple $(a_1, a_2, a_3)$. Define nonnegative radii
\begin{equation}
\alpha_p = \frac{d(a_p, a_q) + d(a_p, a_h) - d(a_q, a_h)}{2}, \label{eq:ball-radius-definition}
\end{equation}
where $\{p, q, h\} = \{1, 2, 3\}$. The triangle inequality gives $\alpha_p \ge 0$, and
\begin{equation}
\alpha_p + \alpha_q = d(a_p, a_q) \quad (p \ne q). \label{eq:radii}
\end{equation}
Set $f_p(x) = (\alpha_p - d(a_p, x))_+$, where $(z)_+ = \max\{z, 0\}$. Let $U_p = \{x \in T : f_p(x) > 0\}$. These sets are pairwise disjoint. If $f_p(x)$ and $f_q(x)$ were positive, then
\begin{equation}
d(a_p, x) + d(a_q, x) < \alpha_p + \alpha_q = d(a_p, a_q), \label{eq:ball-disjointness}
\end{equation}
contradicting the triangle inequality.

Consider the pseudometric
\begin{equation}
q(x, y) = \sum_{p=1}^3 |f_p(x) - f_p(y)|. \label{eq:ball-pseudometric}
\end{equation}
We claim that $q(x, y) \le d(x, y)$. First, replacing negative values by zero cannot increase the absolute difference between two numbers. Hence, for any $x, y \in T$,
\begin{equation}
|f_p(x) - f_p(y)| \le |d(a_p, x) - d(a_p, y)| \le d(x, y). \label{eq:ball-lipschitz}
\end{equation}
The last step follows from the triangle inequality: $d(a_p, x) \le d(a_p, y) + d(x, y)$, and the same inequality holds with $x, y$ exchanged. Thus the absolute difference of the two distances is at most $d(x, y)$. This is the 1-Lipschitz property used here.

If $x, y \in U_p$, pairwise disjointness gives $f_h(x) = f_h(y) = 0$ for every $h \ne p$, so $q(x, y) = |f_p(x) - f_p(y)| \le d(x, y)$. If exactly one point belongs to $\bigcup_p U_p$, suppose without loss of generality that $x \in U_p$ and $y$ lies outside the union. Then every $f_h(y)$ is zero, and $f_h(x) = 0$ for $h \ne p$. Again, $q(x, y) = |f_p(x) - f_p(y)| \le d(x, y)$. If both points lie outside the union, every function vanishes at both points, so $q(x, y) = 0$.

It remains to consider $x \in U_p$ and $y \in U_q$ with $p \ne q$. In this case,
\begin{align}
q(x, y) &= f_p(x) + f_q(y) \label{eq:different-ball-distance-step-1} \\
&= d(a_p, a_q) - d(a_p, x) - d(a_q, y) \le d(x, y). \label{eq:different-ball-distance}
\end{align}

For each center $a_p$ and any $0 \le z < \alpha_p$, consider the set of destinations $J_{p, z} = \{x \in T : f_p(x) > z\} = \{x \in T : d(a_p, x) < \alpha_p - z\}$ inside the ball centered at $a_p$ of radius $\alpha_p - z$. This set contains $a_p$ because $f_p(a_p) = \alpha_p > z$, whereas \eqref{eq:radii} gives $f_p(a_q) = 0 \le z$ for every $q \ne p$, so the other two centers lie in the exterior $T \setminus J_{p, z}$. Thus, the cut $\delta_T J_{p, z}$ separating the destinations $J_{p, z}$ inside this ball from the destinations $T \setminus J_{p, z}$ outside it is a nontrivial cut of $T$ and satisfies $K(\delta_T J_{p, z}) \ge \lambda$.

Figure~\ref{fig:three-balls} illustrates the expansion of the balls $J_{p, z}$ as $z$ decreases from $\alpha_p$ to zero and shows an edge crossing the boundary cut $\delta_T J_{p, z}$.

\begin{figure}[!htbp]
\centering
\begin{tikzpicture}[x=1cm,y=1cm,font=\fontsize{8}{10}\selectfont,
 dot/.style={circle,fill=black,inner sep=1.1pt}]
\begin{scope}
 \coordinate (a1) at (0,0); \coordinate (a2) at (1.4,0);
 \coordinate (a3) at (.7,1.21244);
 \foreach \n/\idx in {a1/1,a2/2,a3/3}{
  \foreach \rad in {.24,.47}{\draw[black!45,densely dashed,line width=.35pt] (\n) circle (\rad);}
  \draw[black,line width=.5pt] (\n) circle (.7);
  \node[dot] at (\n) {};
  \node[below,inner sep=2pt] at (\n) {$a_\idx$};
 }
 \draw[->,line width=.45pt] (a1)--(-.606,.35);
 \node[fill=white,inner sep=.6pt] at (-.24,.37) {$\alpha_1$};
 \draw[->,line width=.45pt] (a2)--(2.006,.35);
 \node[fill=white,inner sep=.6pt] at (1.64,.37) {$\alpha_2$};
 \draw[->,line width=.45pt] (a3)--(.7,1.91244);
 \node[fill=white,inner sep=.6pt] at (.99,1.56) {$\alpha_3$};
 \node at (.7,-1.15) {(a) \LMfigtext{Expanding balls}{球逐步扩张}};
 \node at (.7,-1.65) {$\alpha_p+\alpha_q=d(a_p,a_q)$};
\end{scope}
\begin{scope}[xshift=5cm,yshift=.55cm]
 \draw[black!55,line width=.45pt] (0,0) circle (1.05);
 \foreach \rad in {.35,.85}{\draw[black!45,densely dashed,line width=.35pt] (0,0) circle (\rad);}
 \draw[black,line width=.8pt] (0,0) circle (.6);
 \node[dot] at (0,0) {};
 \node[above left,inner sep=2pt] at (0,0) {$a_p$};
 \coordinate (x) at (.35,0);\coordinate (y) at (.85,0);
 \draw[line width=.8pt] (x)--(y);
 \node[dot,label={[label distance=1pt]below:$x$}] at (x) {};
 \node[dot,label={[label distance=1pt]below:$y$}] at (y) {};
 \node[anchor=east,inner sep=1pt] at (-1.12,1.08) {$J_{p,z}$};
 \draw[->,black!65,line width=.4pt] (-1.05,1.05)--(-.42426,.42426);
 \draw[->,line width=.45pt] (-.15,-.15)--(-.74,-.74);
 \node at (0,-1.7) {(b) \LMfigtext{An edge crosses a ball boundary}{一条边跨越球的边界}};
 \node at (0,-2.2) {$f_p(y)\leq z<f_p(x)$};
\end{scope}
\end{tikzpicture}
\caption{Ball expansion and cut counting in Lemma~\ref{lem:three-ball}. (a) As $z$ decreases to zero, the radius $\alpha_p-z$ of $J_{p,z}$ grows to $\alpha_p$; dashed contours show intermediate radii. Circular contours illustrate pseudometric balls, whose disjointness follows from \eqref{eq:radii}. (b) The thick contour contains $x$ but not $y$. The $z$ interval separating them has length $|f_p(x)-f_p(y)|$; weighting by capacities yields the integral in the proof.}
\label{fig:three-balls}
\end{figure}

For any $x, y \in T$, as $z$ ranges over $[0, \alpha_p)$, the indicator difference $|\mathbf{1}_{J_{p, z}}(x) - \mathbf{1}_{J_{p, z}}(y)|$ equals one exactly when $\min\{f_p(x), f_p(y)\} \le z < \max\{f_p(x), f_p(y)\}$, and is zero otherwise. Equivalently, for fixed $z$, this difference equals one exactly when the edge $\{x, y\}$ crosses the cut $\delta_T J_{p, z}$. This gives the integral identity
\begin{equation}
|f_p(x) - f_p(y)| = \int_0^{\alpha_p} \bigl|\mathbf{1}_{J_{p, z}}(x) - \mathbf{1}_{J_{p, z}}(y)\bigr|\,dz. \label{eq:level-set-identity}
\end{equation}
Sum this identity over all unordered pairs $\{x, y\} \in \binom{T}{2}$ with weights $K_{xy}$. Using
\begin{equation}
\sum_{\{x, y\} \in \binom{T}{2}} K_{xy} \bigl|\mathbf{1}_{J_{p, z}}(x) - \mathbf{1}_{J_{p, z}}(y)\bigr| = K(\delta_T J_{p, z}), \label{eq:level-set-cut}
\end{equation}
we obtain
\begin{align}
K \cdot d \ge K \cdot q &= \sum_{p=1}^3 \int_0^{\alpha_p} K(\delta_T J_{p, z})\,dz \label{eq:integrated-ball-bound-step-1} \\
&\ge \lambda \sum_{p=1}^3 \alpha_p = \lambda \pi_d(a_1, a_2, a_3). \label{eq:integrated-ball-bound}
\end{align}
When $\alpha_p = 0$, the corresponding integral is zero. Maximizing over $(a_1, a_2, a_3)$ gives $K \cdot d \ge \lambda M(d)$.
\end{proof}

The bound also follows from the sufficiency of the cut condition for fractional multicommodity flow with three terminals, a special case of Seymour's four-terminal theorem~\cite{seymour1980}: demands of $\lambda / 2$ between each pair of three distinct terminals satisfy the cut condition, and charging their routes by $d$ gives \eqref{eq:balls-bound}. The direct proof above exposes the metric quantity used in the next step.

\subsection{Routing Feasibility by Cost Cancellation}
We now bound the distance cost of the target demands together with the auxiliary-edge cost. The latter appears on both sides and cancels, leaving the required inequality for the original capacities $B$.

\begin{proof}[Proof of Theorem~\ref{thm:destination-routing}]
If $m = 1$, the set $T$ contains a single destination, all distances in \eqref{eq:core-metric} are zero, and all demands are local. The conclusion follows. Suppose $m \ge 2$ and first consider rational $A, B$. Take $H_i$ and $K$ from Lemma~\ref{lem:pairing-weights}, and write
\begin{equation}
Q = \sum_{i=1}^k \sum_{u \in T} A_{iu} d(u, t_i), \qquad H = \sum_{i=1}^k H_i. \label{eq:target-and-pairing-cost}
\end{equation}
Dividing both sides of identity \eqref{eq:H-degree} by two, we obtain $A_{iu} = \frac{1}{2} \sum_{v \in T \setminus \{u\}} H_i(u, v) + H_i(u, u)$. Substituting into $Q$ yields
\begin{align}
Q &= \sum_{i=1}^k \sum_{u \in T} \biggl( \frac{1}{2} \sum_{v \ne u} H_i(u, v) + H_i(u, u) \biggr) d(u, t_i) \label{eq:Qpair-step-1} \\
&= \frac{1}{2} \sum_{i=1}^k \sum_{u \le v} H_i(u, v) \bigl( d(u, t_i) + d(v, t_i) \bigr). \label{eq:Qpair}
\end{align}
The second equality combines the two terms for each distinct-endpoint pair $\{u, v\}$ with $u < v$, using $H_i(u, v) = H_i(v, u)$. For $u = v$, the term $\frac{1}{2} H_i(u, u)(d(u, t_i) + d(u, t_i))$ equals $H_i(u, u) d(u, t_i)$. Add the auxiliary-edge cost
\begin{equation}
\frac{1}{2} H \cdot d = \frac{1}{2} \sum_{i=1}^k \sum_{u \le v} H_i(u, v) d(u, v) \label{eq:auxiliary-edge-cost}
\end{equation}
to both sides of \eqref{eq:Qpair}. The half-perimeter definition \eqref{eq:perimeter} and \eqref{eq:H-mass} give
\begin{align}
Q + \frac{1}{2} H \cdot d &= \sum_{i=1}^k \sum_{u \le v} H_i(u, v) \pi_d(u, v, t_i) \label{eq:averaged-perimeter-step-1} \\
&\le M(d) \sum_{i=1}^k \sum_{u \le v} H_i(u, v) = R M(d). \label{eq:averaged-perimeter}
\end{align}
On the other hand, \eqref{eq:Kcut} and Lemma~\ref{lem:three-ball} imply
\begin{equation}
\underbrace{B \cdot d + \frac{1}{2} H \cdot d}_{K \cdot d} \ge R M(d) \ge Q + \frac{1}{2} H \cdot d. \label{eq:cancel}
\end{equation}
Subtracting $\frac{1}{2} H \cdot d$ from the first and last expressions in \eqref{eq:cancel} proves \eqref{eq:core-metric} for rational capacities.

For general nonnegative real $A, B$, choose rational sequences $0 \le A_{iu}^{(n)} \le A_{iu}$ and $B_{uv}^{(n)} \ge B_{uv}$ such that $A^{(n)} \to A$ and $B^{(n)} \to B$ as $n \to \infty$, with row sums $r_i^{(n)} = \sum_{u \in T} A_{iu}^{(n)}$. For every nonempty proper subset $J \subsetneq T$, summing $A_{iu}^{(n)} \le A_{iu}$ over $J$ and over $T \setminus J$ gives $A_i^{(n)}(J) \le A_i(J)$ and $r_i^{(n)} - A_i^{(n)}(J) \le r_i - A_i(J)$. Together with $B^{(n)}(\delta_T J) \ge B(\delta_T J)$ and \eqref{eq:core-cut}, this yields
\begin{align}
B^{(n)}(\delta_T J) &\ge B(\delta_T J) \ge \sum_{i=1}^k \max\bigl\{A_i(J),\, r_i - A_i(J)\bigr\} \label{eq:rational-cut-bound-step-1} \\
&\ge \sum_{i=1}^k \max\bigl\{A_i^{(n)}(J),\, r_i^{(n)} - A_i^{(n)}(J)\bigr\}, \label{eq:rational-cut-bound}
\end{align}
so $A^{(n)}, B^{(n)}$ also satisfy \eqref{eq:core-cut}. Applying the rational case for each $n$ and letting $n \to \infty$ gives \eqref{eq:core-metric} for real capacities.

If $R = 0$, all $A_{iu}$ are zero, so no flow is required. Finally, when $R > 0$, each term $\max\{A_i(J), r_i - A_i(J)\} \ge r_i / 2$ in \eqref{eq:core-cut}, so $B(\delta_T J) \ge R / 2 > 0$ for every nontrivial cut, and the destination network $(T, B)$ is connected. Assign any nonnegative length $\ell_e$ to each edge $e$ of $(T, B)$, and let $d_B$ be the shortest-path pseudometric on $(T, B)$. Applying \eqref{eq:core-metric} to $d_B$ and using $d_B(u, v) \le \ell_{uv}$ on each edge gives
\begin{equation}
\sum_{i=1}^k \sum_{u \in T} A_{iu} d_B(u, t_i) \le B \cdot d_B \le \sum_e B_e \ell_e. \label{eq:destination-duality}
\end{equation}
By Lemma~\ref{lem:duality}, the demands $u \to t_i$ of rates $A_{iu}$, for $i \in [k]$ and $u \in T$, admit a simultaneous fractional routing in $(T, B)$.
\end{proof}

\section{Eliminating Internal Nodes and Edges Between Sources}\label{sec:reduction}

This section completes the proof of Theorem~\ref{thm:main-equivalence}. By the max-flow min-cut theorem, both single-sink routability to the common destination $t_0$ and the pairwise maximum-flow condition \eqref{eq:destination-connectivity} between destinations can be expressed as lower-bound constraints on the capacity of every cut $U \subseteq V$. Fix a rational pseudometric $d$ on $V$, and use $d(u, v)$ as the price per unit of capacity on an edge joining $u$ and $v$. We choose new capacities $x$ on the complete graph on $V$ that continue to satisfy all of these cut-capacity lower bounds while keeping the total capacity cost at most the original cost $\sum_{e=uv \in E} c_e d(u, v)$. These capacities will be zero on all edges incident to internal nodes and on all edges between sources, with total incident capacity $r_i$ at each source $s_i$. The resulting network has the structure of Section~\ref{sec:core}, so Theorem~\ref{thm:destination-routing} can establish the routing-cost inequality needed for the original network.

\subsection{The Parsimonious Cut-Covering Theorem}
Let $h(U)$ denote the required capacity lower bound for each cut $U \subseteq V$. When the singleton lower bound is $h(\{v\}) = 0$ at each internal node $v$ and $h(\{s_i\}) = r_i$ at each source $s_i$, achieving equality $x(\delta\{v\}) = h(\{v\})$ in these singleton cut constraints forces all edges incident to internal nodes to have capacity zero and gives each source $s_i$ total incident capacity $r_i$. Thus, among the minimum-cost capacity assignments satisfying all cut constraints $x(\delta U) \ge h(U)$, we seek an optimal solution satisfying $x(\delta\{v\}) = h(\{v\})$ simultaneously at these nodes. The parsimony theorem of Kir\'aly~\cite[Theorem~2.1]{kiraly2007} provides such a guarantee. We first define the set-function properties needed to apply it.

A set function $p : 2^V \to \mathbb{R}$ is skew-supermodular if, for every $X, Y \subseteq V$, at least one of the following holds:
\begin{equation}
p(X) + p(Y) \le p(X \cap Y) + p(X \cup Y), \label{eq:skew-intersection}
\end{equation}
\begin{equation}
p(X) + p(Y) \le p(X \setminus Y) + p(Y \setminus X). \label{eq:skew-difference}
\end{equation}
For a set function $h : 2^V \to \mathbb{R}$, a node $v$ is subadditive for $h$ if
\begin{equation}
h(X) + h(\{v\}) \ge h(X \cup \{v\}) \label{eq:subadditive}
\end{equation}
for every $X \subseteq V \setminus \{v\}$. Thus, adding $v$ to $X$ increases the function value by at most the singleton value $h(\{v\})$.

The following lemma specializes the results of Kir\'aly~\cite[Claim~1.1 and Theorem~2.1]{kiraly2007} to the set functions and costs used here.

\begin{lemma}[Kir\'aly's theorem, rational form]
\label{lem:parsimonious-cover}
Let $p_1, p_2 : 2^V \to \mathbb{Q}$ be skew-supermodular. Put $h = \max\{p_1, p_2, 0\}$, taking the maximum separately for each subset of $V$. Suppose $h(U) = h(V \setminus U)$ and $h(\varnothing) = 0$. For a nonnegative rational pseudometric $d$ on $V$, consider the minimum-cost cut-covering linear program
\begin{equation}
\min\bigl\{ x \cdot d \;:\; x \ge 0,\;\; x(\delta U) \ge h(U) \text{ for every } U \subseteq V \bigr\} \label{eq:cover-lp}
\end{equation}
on the complete graph on $V$. Here $x$ assigns a nonnegative capacity to each unordered pair $\{u, v\} \in \binom{V}{2}$, and $x(\delta U)$ sums these capacities across $U$. If the feasible set is nonempty, there is an optimal solution $x$ satisfying $x(\delta\{v\}) = h(\{v\})$ simultaneously at every node $v$ that is subadditive for $h$.
\end{lemma}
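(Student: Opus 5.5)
The plan is to derive the statement from Kir\'aly's results~\cite[Claim~1.1 and Theorem~2.1]{kiraly2007}, which are stated for integral requirement functions and integral (hyper)graph covers. The work is to check that our data fit his hypotheses after a suitable scaling. There are three preliminary steps. First, the program \eqref{eq:cover-lp} has finitely many constraints, nonnegative variables and a nonnegative objective. So when it is feasible its optimum is attained, and since all data are rational ($p_1,p_2$ and $d$ take rational values), some optimal vertex solution $x^*$ is rational. Second, choose an integer $N\ge 1$ such that $Nh$ and $Nx^*$ are integral. Then $Nx^*$ is an integral cover of the symmetric requirement $Nh=\max\{Np_1,Np_2,0\}$ of cost $N(x^*\cdot d)$. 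Scaling by $N>0$ preserves skew-supermodularity of $Np_1$ and $Np_2$, the symmetry $h(U)=h(V\setminus U)$, the condition $h(\varnothing)=0$, and the subadditive nodes, since \eqref{eq:subadditive} is homogeneous. Third, the cost $d$ only needs to be nonnegative and to satisfy the triangle inequality, so a pseudometric (zero distances allowed) meets Kir\'aly's cost hypothesis.

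Next I would invoke Kir\'aly's Claim~1.1, which I expect is exactly the statement that a maximum of two skew-supermodular functions and $0$, when symmetric, belongs to the class of requirement functions treated in his Theorem~2.1. I would then apply Theorem~2.1 to $Nh$ and $Nx^*$. It yields an integral cover $y$ of $Nh$ with $y\cdot d\le Nx^*\cdot d$ and $y(\delta\{v\})=Nh(\{v\})$ at every subadditive node $v$. Setting $x=y/N$ gives a feasible solution of \eqref{eq:cover-lp} with $x\cdot d\le x^*\cdot d$, so $x$ is optimal, and it satisfies $x(\delta\{v\})=h(\{v\})$ at every subadditive node simultaneously.

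In case the cited theorem is not literally in this form, I would keep a direct argument in reserve. Among optimal solutions, take one minimizing $\sum_v \bigl(x(\delta\{v\})-h(\{v\})\bigr)$ over subadditive $v$. Suppose some subadditive $v$ has $x(\delta\{v\})>h(\{v\})$. I would seek a splitting pair $vu$, $vw$ with positive capacities for which the operation ``subtract $\varepsilon$ from $x_{vu}$ and $x_{vw}$, add $\varepsilon$ to $x_{uw}$'' keeps $x(\delta U)\ge h(U)$ for every $U$. By the triangle inequality this does not increase the cost, and it strictly lowers $x(\delta\{v\})$, contradicting the choice of $x$. A pair is blocked only by a set $U\ni v$ with $u,w\notin U$ (using symmetry to put $v$ inside) that is tight or nearly tight. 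Each of $p_1,p_2$ satisfies one of the two uncrossing inequalities \eqref{eq:skew-intersection} or \eqref{eq:skew-difference}, and $x(\delta\cdot)$ is symmetric submodular and posimodular. Using these, I would uncross the maximal dangerous sets containing $v$ and show that they cover all neighbours of $v$ in at most two classes. Subadditivity \eqref{eq:subadditive} applied to $X=U\setminus\{v\}$ would then force $x(\delta\{v\})\le h(\{v\})$.

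I expect the main obstacle to be this uncrossing step for $h=\max\{p_1,p_2,0\}$. Two tight sets may be tight for different functions $p_1$ and $p_2$, so neither inequality applies directly to the mixed pair. This is precisely the difficulty Kir\'aly's Claim~1.1 resolves, which is why I would lean on the citation rather than redo it.
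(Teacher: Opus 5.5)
\textbf{Gap: the integral form you attribute to Kir\'aly's theorem is false.} The step that fails is the form in which you invoke Kir\'aly's Theorem~2.1. You claim that for any $N$ making $Nh$ and $Nx^*$ integral there is an \emph{integral} cover $y$ of $Nh$ with $y\cdot d\le Nx^*\cdot d$ and $y(\delta\{v\})=Nh(\{v\})$ at every subadditive node. Parity makes this false in general. Take $V=\{a,b,c\}$, $p_1(U)=1$ for $\varnothing\ne U\subsetneq V$, $p_1(\varnothing)=p_1(V)=0$, and $p_2\equiv 0$. Then $h=p_1$ is symmetric and skew-supermodular, and every node is subadditive since $h\le 1=h(\{v\})$. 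Let $d(a,b)=d(b,c)=1$ and $d(a,c)=2$. The cuts $\{a\}$ and $\{c\}$ give $x\cdot d\ge (x_{ab}+x_{ac})+(x_{bc}+x_{ac})\ge 2$, so $x^*=(x_{ab},x_{bc},x_{ac})=(1,1,0)$ is an optimal vertex, and $N=1$ meets your integrality requirement. An integral $y$ with all three degrees equal to $1$ would have odd degree sum, so it does not exist; the parsimonious optimum here is the fractional $x_{ab}=x_{bc}=x_{ac}=1/2$. Hence the cited result cannot have the form you attribute to it. An integral-cover or splitting-off formulation would need at least an extra factor of $2$, so that all degrees and requirements are even and each excess $y(\delta\{v\})-Nh(\{v\})$ can be removed by splitting pairs. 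You would also still have to confirm that Kir\'aly's theorem is actually stated that way.

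\textbf{How the paper avoids this.} The paper uses the fractional LP form (Theorem~\ref{thm:kiraly-external}): for integer-valued $p_1,p_2$ and an integer-valued pseudometric, the linear program itself has a possibly fractional optimal solution that is parsimonious at every subadditive node. It multiplies $p_1,p_2$ by a common denominator $L$ and $d$ by a common denominator $M$, applies this form, and divides the solution by $L$. Scaling $d$ by $M>0$ does not change the set of optimal solutions, so your remark that only the triangle inequality matters is harmless but unnecessary. In this route the rational vertex $x^*$ and the integral cover $Nx^*$ play no role at all. Your reserve direct argument does not close the gap either: the uncrossing of sets that are tight for different $p_j$, which you yourself identify as the crux, is left undone.
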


\begin{proof}
When $p_1, p_2$ and the pseudometric $d$ are integer-valued, the lemma follows from Theorem~\ref{thm:kiraly-external} in Appendix~\ref{sec:external-results}.

For the rational case, choose a positive integer common denominator $L$ so that $L p_1$ and $L p_2$ are integer-valued, and a positive integer common denominator $M$ so that $M d$ is integer-valued. Setting $y = L x$, the linear program \eqref{eq:cover-lp} is equivalent to the minimum-cost cut-covering problem with integer cut bounds $L h = \max\{L p_1, L p_2, 0\}$ and integer pseudometric $M d$:
\begin{equation}
\begin{aligned}
&\min_{y \ge 0} \; y \cdot (M d) \\
&\text{subject to } y(\delta U) \ge L h(U) \quad (U \subseteq V).
\end{aligned} \label{eq:scaled-cover-lp}
\end{equation}
A node $v$ is subadditive for $h$ if and only if it is subadditive for $L h$. Applying the integer case above yields an optimal solution $y$ satisfying $y(\delta\{v\}) = L h(\{v\})$ at every such node $v$. Dividing by $L$, the rescaled vector $x = y / L$ is an optimal solution to \eqref{eq:cover-lp} satisfying $x(\delta\{v\}) = h(\{v\})$.
\end{proof}

\subsection{Constructing and Verifying the Cut Function}\label{subsec:requirements}
Let $(G, c)$ have distinct sources $S = \{s_1, \dots, s_k\}$ and a disjoint nonempty destination set $T = \{v_1, \dots, v_m\}$, with rational source rates $r_i \in \mathbb{Q}_+$ and total rate $R = \sum_{i=1}^k r_i > 0$. To apply Lemma~\ref{lem:parsimonious-cover}, we express the pairwise maximum-flow condition \eqref{eq:destination-connectivity} and single-sink routability to the fixed destination $t_0 \in T$ as cut-capacity lower bounds.

First, by the max-flow min-cut theorem, condition \eqref{eq:destination-connectivity} requires every cut separating $T$ to have capacity at least $R$. For every $U \subseteq V$, define
\begin{equation}
q_T(U) = \begin{cases}
1, & \text{if } \varnothing \ne U \cap T \ne T, \\
0, & \text{if } U \cap T = \varnothing \text{ or } T \subseteq U,
\end{cases} \label{eq:destination-cut-function}
\end{equation}
so that \eqref{eq:destination-connectivity} holds if and only if $c(\delta_G U) \ge R q_T(U)$ for every $U \subseteq V$.

Second, write $w(U) = \sum_{i : s_i \in U} r_i$ for the total rate of sources in $U$. As noted in Section~\ref{sec:results}-A, after adding a super-source $q$ joined to each $s_i$ by an edge of capacity $r_i$, the single-sink demands $(s_i, t_0, r_i)$ are simultaneously routable if and only if the maximum flow from the single source $q$ to the single sink $t_0$ in the augmented network is at least $R$. Every cut separating $q$ from $t_0$ in the augmented network has the form $\{q\} \cup U$ with $U \subseteq V \setminus \{t_0\}$; the edges crossing this cut consist of $\delta_G U$ together with the new edges $q s_i$ for $s_i \notin U$, so its capacity is $c(\delta_G U) + R - w(U)$. By the max-flow min-cut theorem, this capacity is at least $R$ if and only if $c(\delta_G U) \ge w(U)$. For a subset $U \subseteq V$ containing $t_0$, applying this condition to its complement $V \setminus U \subseteq V \setminus \{t_0\}$ gives $c(\delta_G U) = c(\delta_G(V \setminus U)) \ge w(V \setminus U) = R - w(U)$. Defining
\begin{equation}
q_0(U) = \begin{cases}
w(U), & \text{if } t_0 \notin U, \\
R - w(U), & \text{if } t_0 \in U,
\end{cases} \label{eq:single-sink-cut-function}
\end{equation}
we conclude that the single-sink demands $(s_i, t_0, r_i)$ are simultaneously routable if and only if $c(\delta_G U) \ge q_0(U)$ for every $U \subseteq V$.

Combining these two lower bounds shows that $(G, c)$ satisfies both \eqref{eq:destination-connectivity} and single-sink routability to $t_0$ if and only if $c(\delta_G U) \ge F(U)$ for every $U \subseteq V$, where
\begin{equation}
F(U) = \max\{R q_T(U),\, q_0(U),\, 0\}. \label{eq:F}
\end{equation}

\begin{lemma}
\label{lem:skew-subadditive}
Both $R q_T$ and $q_0$ are skew-supermodular, and every node $v \in V$ is subadditive for $F$.
\end{lemma}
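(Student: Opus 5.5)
The plan is to verify the three claims directly from the definitions \eqref{eq:destination-cut-function}, \eqref{eq:single-sink-cut-function} and \eqref{eq:F}, splitting into cases according to how the sets meet $T$ (for $q_T$) or whether they contain $t_0$ (for $q_0$). Throughout, sources and destinations are disjoint, and each source $s_i$ carries rate $r_i$. For a node $v$, write $r(v)=r_i$ if $v=s_i$ and $r(v)=0$ otherwise, so that $w(U\cup\{v\})=w(U)+r(v)$ for $v\notin U$.

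\emph{Skew-supermodularity of $Rq_T$.} Since $R>0$, it suffices to treat $q_T$. Note that $q_T(U)=1$ exactly when $U$ splits $T$.
\begin{itemize}
\item If $q_T(X)+q_T(Y)\le 1$, then \eqref{eq:skew-intersection} holds unless both $X\cap Y$ and $X\cup Y$ fail to split $T$. In that case one checks that $X\setminus Y$ or $Y\setminus X$ splits $T$ whenever one of $X,Y$ does, so \eqref{eq:skew-difference} holds.
\item If $q_T(X)=q_T(Y)=1$, suppose \eqref{eq:skew-intersection} fails. Then $X\cap Y\cap T=\varnothing$ or $T\subseteq X\cup Y$.
\begin{itemize}
\item In the first subcase, $X\setminus Y$ and $Y\setminus X$ contain $X\cap T$ and $Y\cap T$ respectively. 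These are nonempty. Each also misses part of $T$: $(X\setminus Y)\cap T=X\cap T\ne T$, and similarly for $Y$.
\item In the second subcase, apply the symmetric argument to complements. Here $T\setminus Y\subseteq X\setminus Y$, and $X\setminus Y$ misses $Y\cap T\neq\varnothing$.
\end{itemize}
Either way both differences split $T$, so \eqref{eq:skew-difference} holds.
\end{itemize}
This is the standard fact that $q_T$ is a symmetric crossing-type requirement. I expect only the bookkeeping in the case $q_T(X)+q_T(Y)=1$ to need care.

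\emph{Skew-supermodularity of $q_0$.} Use the modularity of $w$: $w(X)+w(Y)=w(X\cap Y)+w(X\cup Y)=w(X\setminus Y)+w(Y\setminus X)+2w(X\cap Y)$.
\begin{itemize}
\item If $t_0$ lies in neither $X$ nor $Y$, it also lies in neither $X\cap Y$ nor $X\cup Y$. Then \eqref{eq:skew-intersection} holds with equality by modularity.
\item If $t_0$ lies in both, the same holds with $R-w$ on all four sets.
\item If $t_0\in X\setminus Y$, use \eqref{eq:skew-difference}. We have $q_0(X\setminus Y)=R-w(X\setminus Y)$ and $q_0(Y\setminus X)=w(Y\setminus X)$. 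The required inequality
\[
R-w(X)+w(Y)\le R-w(X\setminus Y)+w(Y\setminus X)
\]
reduces to $w(Y)-w(X\cap Y)\le w(Y\setminus X)$, which is an equality.
\end{itemize}

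\emph{Subadditivity of $F$.} First observe $0\le F\le R$. Next compute the singleton values.
\begin{itemize}
\item For a destination $v$, $F(\{v\})=R$. If $|T|\ge2$ this comes from $Rq_T(\{v\})$. If $T=\{t_0\}$ it comes from $q_0(\{t_0\})=R-w(\{t_0\})=R$, since no source is at $t_0$. Hence \eqref{eq:subadditive} is immediate from $F\le R$.
\item For $v\notin T$, $F(\{v\})\ge q_0(\{v\})=r(v)$. Adding $v$ to $X\subseteq V\setminus\{v\}$ does not change $q_T$. It also changes $q_0$ by at most $r(v)$: $q_0$ increases by $r(v)$ if $t_0\notin X$ and decreases by $r(v)$ if $t_0\in X$.
\end{itemize}
In the second case, $F(X\cup\{v\})\le \max\{Rq_T(X),\,q_0(X)+r(v),\,0\}\le F(X)+F(\{v\})$.

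The main obstacle is the case analysis for $q_T$; the other parts are short computations.
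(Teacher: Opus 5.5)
Your proposal is correct and follows the paper's proof essentially step for step. It uses the same three case splits: the position of $t_0$ for $q_0$, exploiting the modularity of $w$; the values of $q_T(X),q_T(Y)$ for $q_T$, with the two failure modes $X\cap Y\cap T=\varnothing$ or $T\subseteq X\cup Y$; and, for subadditivity, the bound $F\le R$ together with the observation that adding $v\notin T$ leaves $q_T$ unchanged and moves $q_0$ by exactly $r(v)$. Treating internal nodes and sources together as $v\notin T$ via $r(v)$, and checking that $F(\{t_0\})=R$ also when $T=\{t_0\}$, are minor tidy-ups rather than a different route.
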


\begin{proof}
First consider $q_0$, and take any $X, Y \subseteq V$. If $t_0$ belongs to neither $X$ nor $Y$, then $q_0$ equals the source-rate sum $w$ on each of $X, Y, X \cap Y, X \cup Y$; if $t_0$ belongs to both $X$ and $Y$, then $q_0$ equals $R - w$ on all four sets. Since $w(X) + w(Y) = w(X \cap Y) + w(X \cup Y)$, inequality \eqref{eq:skew-intersection} holds with equality in both cases. If $t_0$ belongs to exactly one of the two sets, say $t_0 \in X$ and $t_0 \notin Y$, then $t_0 \in X \setminus Y$ and $t_0 \notin Y \setminus X$. Removing $X \cap Y$ from both $X$ and $Y$ preserves the difference of their source-rate sums, namely $w(X \setminus Y) - w(Y \setminus X) = w(X) - w(Y)$, and therefore
\begin{align}
&q_0(X \setminus Y) + q_0(Y \setminus X) \notag \\
&\quad= R - w(X \setminus Y) + w(Y \setminus X) \label{eq:sink-cut-modularity-step-1} \\
&\quad= R - w(X) + w(Y) = q_0(X) + q_0(Y), \label{eq:sink-cut-modularity}
\end{align}
so \eqref{eq:skew-difference} holds with equality.

Next consider $R q_T$. It suffices to verify that the $\{0, 1\}$-valued function $q_T$ satisfies \eqref{eq:skew-intersection} or \eqref{eq:skew-difference} for any $X, Y \subseteq V$, and then multiply by $R > 0$. If $q_T(X) = q_T(Y) = 0$, the left-hand side of both inequalities is zero, so they hold by nonnegativity. If exactly one value is $1$, say $q_T(X) = 1$ and $q_T(Y) = 0$, then $Y$ does not separate $T$: when $Y \cap T = \varnothing$, the set $X \setminus Y$ contains the same destinations as $X$, so $q_T(X \setminus Y) = 1$ and \eqref{eq:skew-difference} holds; when $T \subseteq Y$, the set $X \cap Y$ contains the same destinations as $X$, so $q_T(X \cap Y) = 1$ and \eqref{eq:skew-intersection} holds. If $q_T(X) = q_T(Y) = 1$, the left-hand side equals $2$. If both $X \cap Y$ and $X \cup Y$ separate $T$, the right-hand side of \eqref{eq:skew-intersection} also equals $2$; otherwise, either $(X \cap Y) \cap T = \varnothing$ or $T \subseteq X \cup Y$. In the former case, the nonempty subsets $X \cap T$ and $Y \cap T$ are disjoint, so $(X \setminus Y) \cap T = X \cap T$ and $(Y \setminus X) \cap T = Y \cap T$ are both nonempty proper subsets of $T$; in the latter case, $(X \setminus Y) \cap T = T \setminus Y$ and $(Y \setminus X) \cap T = T \setminus X$ are likewise nonempty proper subsets of $T$. In either case, $q_T(X \setminus Y) = q_T(Y \setminus X) = 1$, so \eqref{eq:skew-difference} holds.

Finally, to verify subadditivity \eqref{eq:subadditive}, fix $v \in V$ and $X \subseteq V \setminus \{v\}$. If $v \notin S \cup T$ is an internal node, then $F(\{v\}) = 0$, and adding $v$ to $X$ changes neither its sources nor its destinations, so $F(X \cup \{v\}) = F(X)$. If $v = s_i \in S$, then $F(\{s_i\}) = r_i$, while adding $s_i$ leaves $q_T$ unchanged and increases $q_0$ by at most $r_i$, giving $F(X \cup \{s_i\}) \le F(X) + r_i$. If $v \in T$, then $F(\{v\}) = R$, and since $F$ never exceeds $R$, we have $F(X \cup \{v\}) \le R \le F(X) + F(\{v\})$.
\end{proof}

\subsection{Capacity Transformation Without Cost Increase}

\begin{lemma}[Capacity transformation without increasing total capacity cost]
\label{lem:capacity-transformation}
Suppose $c(\delta_G U) \ge F(U)$ for every $U \subseteq V$. For every rational pseudometric $d$ on $V$, there are a nonnegative rational capacity matrix $A \in \mathbb{Q}_+^{k \times m}$ and a nonnegative rational capacity vector $B \in \mathbb{Q}_+^{\binom{T}{2}}$. Here $A_{iu}$ is the new capacity between source $s_i$ and destination $u$, and $B_{uv}$ is the new capacity between distinct destinations $u, v \in T$. The $i$th row of $A$ sums to $r_i$, the capacities $A$ and $B$ satisfy \eqref{eq:core-cut}, and
\begin{equation}
\sum_{i=1}^k \sum_{u \in T} A_{iu} d(s_i, u) + B \cdot (d|_T) \le \sum_{e=uv \in E} c_e d(u, v). \label{eq:cost-reduction}
\end{equation}
Here and below, $d|_T$ denotes the restriction of $d$ to the destination set $T$.
\end{lemma}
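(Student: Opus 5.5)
We apply Lemma~\ref{lem:parsimonious-cover} with $p_1 = R q_T$, $p_2 = q_0$, $h = F$ and the given rational pseudometric $d$.

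\textbf{Hypotheses.} Lemma~\ref{lem:skew-subadditive} gives skew-supermodularity of $p_1, p_2$, and it also shows that every node is subadditive for $F$. Both $q_T$ and $q_0$ are rational because $R$ and the $r_i$ are rational. We have $F(\varnothing)=0$. For symmetry, $q_T(U)=q_T(V\setminus U)$ is immediate. For $q_0$, if $t_0\notin U$ then $q_0(V\setminus U)=R-w(V\setminus U)=w(U)=q_0(U)$, and the other case is the same. Hence $F(U)=F(V\setminus U)$.

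\textbf{Feasibility and optimal cost.} The original capacities induce a feasible point $x^0$ on the complete graph, obtained by summing $c_e$ over parallel edges and dropping loops. Its cost is $x^0\cdot d=\sum_{e=uv\in E}c_e d(u,v)$, and it covers $F$ by hypothesis. Lemma~\ref{lem:parsimonious-cover} then gives an optimal $x$ with $x\cdot d\le x^0\cdot d$ and $x(\delta\{v\})=F(\{v\})$ at every node. The optimal value is attained at a vertex of a rational polyhedron, so I will take $x$ rational; I need to check that the integer-case statement can be chosen integral after scaling, which the appendix form of Kir\'aly's theorem should supply.

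\textbf{Degree equalities force the structure.} At an internal node $v$ we have $F(\{v\})=0$, so all pairs incident to $v$ have zero capacity. At a source $s_i$ we have $F(\{s_i\})=\max\{0,r_i,0\}=r_i$; this uses $q_T(\{s_i\})=0$ and $q_0(\{s_i\})=r_i$, valid since $t_0\ne s_i$ because $S\cap T=\varnothing$. So $x(\delta\{s_i\})=r_i$.

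\textbf{Main obstacle: edges between sources.} Source--source edges are not forced to zero by these equalities, and this is the step I expect to be hardest. To kill such an edge $s_is_j$ with capacity $\varepsilon$, I would try splitting at a cut argument. Consider the set $U=\{s_i,s_j\}$. If $t_0\notin U$, the constraint is $x(\delta U)\ge w(U)=r_i+r_j$. But $x(\delta U)=r_i+r_j-2\varepsilon$, so $\varepsilon=0$. This gives the conclusion directly, and I would present it that way.

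\textbf{Defining $A$ and $B$.} With source--source edges eliminated, set $A_{iu}=x_{s_iu}$ and $B_{uv}=x_{uv}$ for $u,v\in T$. Row sums of $A$ equal $r_i$, and the cost equals $x\cdot d$, which gives \eqref{eq:cost-reduction}.

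\textbf{Verifying \eqref{eq:core-cut}.} Fix a nonempty proper $J\subsetneq T$. Place each source on whichever side minimises its contribution, and put no other nodes in the cut. Then $q_T=1$, so the constraint $x(\delta U)\ge R$ gives exactly \eqref{eq:core-minimum-cut}$\,\ge R$. This is equivalent to \eqref{eq:core-cut}.
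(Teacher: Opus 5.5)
Your route is the paper's. You apply Lemma~\ref{lem:parsimonious-cover} with $h=F$ and use the original capacities as a feasible point. The tight degree constraints kill every edge at an internal node, a $q_0$ cut constraint kills source--source edges, and \eqref{eq:core-cut} comes from cuts with $U\cap T=J$.

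Your pairwise cut $U=\{s_i,s_j\}$ is a valid local version of the paper's aggregate argument. It gives $r_i+r_j-2x_{s_is_j}\ge r_i+r_j$, where the paper uses $R=2x(E(S))+x(\delta S)\ge 2x(E(S))+R$. Your hypothesis $t_0\notin U$ always holds because $S\cap T=\varnothing$. You also verify $F(U)=F(V\setminus U)$ and $F(\varnothing)=0$, which the paper leaves implicit.

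The one step that does not go through as written is the rationality of $A$ and $B$, which the lemma asserts.
\begin{itemize}
\item Theorem~\ref{thm:kiraly-external} only guarantees \emph{some} optimal solution with tight singleton constraints. It makes no integrality or rationality claim, so the appendix does not supply what you hope.
\item A rational optimal vertex of \eqref{eq:F-cover-lp} need not satisfy the degree equalities, so you cannot simply switch to a vertex.
\end{itemize}
The paper's fix is to add $x(\delta\{v\})=F(\{v\})$ for every $v$ as equality constraints. Kir\'aly's solution is feasible for this restricted program, so its optimal value equals that of \eqref{eq:F-cover-lp}. The restricted program has rational data and a nonempty bounded feasible set, since $0\le x_{uv}\le x(\delta\{u\})\le R$. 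It therefore has a rational optimal vertex. That vertex is optimal for the original program and satisfies all degree equalities, and the rest of your argument then applies unchanged.
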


\begin{proof}
Fix a rational pseudometric $d$ on $V$. On the complete graph on $V$, consider the minimum-cost cut-covering linear program
\begin{equation}
\min_{x \in \mathbb{R}_+^{\binom{V}{2}}} \; x \cdot d \quad \text{subject to} \quad x(\delta U) \ge F(U) \quad (U \subseteq V), \label{eq:F-cover-lp}
\end{equation}
which is \eqref{eq:cover-lp} in Lemma~\ref{lem:parsimonious-cover} with $h = F$.

Combine parallel-edge capacities in the original network $(G, c)$ and assign zero capacity to each absent pair, giving a capacity vector $c^{\mathrm{ext}}$ on the complete graph. The total capacity crossing each $U \subseteq V$ is unchanged, so
\begin{equation}
c^{\mathrm{ext}}(\delta U) = c(\delta_G U) \ge F(U). \label{eq:extended-capacity-cut}
\end{equation}
Its total capacity cost is $c^{\mathrm{ext}} \cdot d = \sum_{e=uv \in E} c_e d(u, v)$. Thus $c^{\mathrm{ext}}$ is a feasible solution to this linear program.

By \eqref{eq:F} and Lemma~\ref{lem:skew-subadditive}, the cut function $F$ satisfies every hypothesis of Lemma~\ref{lem:parsimonious-cover}, and every node is subadditive for $F$. Thus this linear program has an optimal solution $x$ in which every singleton cut constraint holds with equality. Write $\deg_x(v) = x(\delta\{v\})$ for the total capacity incident to node $v$. Then
\begin{equation}
\deg_x(v) = F(\{v\}) = \begin{cases}
r_i, & v = s_i \in S, \\
R, & v \in T, \\
0, & v \notin S \cup T.
\end{cases} \label{eq:degrees}
\end{equation}
Adding \eqref{eq:degrees} as equality constraints retains the optimal solution above, so the optimal value is unchanged. All constraint coefficients, right-hand sides, and objective coefficients $d(u, v)$ are rational, and each edge capacity satisfies $0 \le x_{uv} \le \deg_x(u) \le R$. This rational linear program has a nonempty bounded feasible set and hence a rational optimal solution. Choose such a solution $x$.

We identify the edges of capacity zero in two steps.
\begin{enumerate}
\item \emph{Eliminate internal nodes.} By \eqref{eq:degrees}, each internal node $v \notin S \cup T$ has total incident capacity $\deg_x(v) = 0$. Since edge capacities are nonnegative, every edge incident to an internal node has capacity zero.
\item \emph{Eliminate edges between sources.} Let $E(S)$ be the edges of the complete graph with both endpoints in the source set $S$. Summing $\deg_x(s_i) = r_i$ over all $k$ sources counts each edge in $E(S)$ twice, since both endpoints lie in $S$, and each edge crossing $\delta S$ once. Together with the cut constraint $x(\delta S) \ge q_0(S) = w(S) = R$, this gives
\begin{align}
R &= \sum_{i=1}^k \deg_x(s_i) \label{eq:source-degree-sum-step-1} \\
&= 2 x(E(S)) + x(\delta S) \ge 2 x(E(S)) + R. \label{eq:source-degree-sum}
\end{align}
It follows that $x(E(S)) = 0$, so no edge between sources has positive capacity.
\end{enumerate}

Thus $x$ has positive capacity only between sources and destinations or between distinct destinations. Set $A_{iu} = x_{s_i u}$ for $i \in [k],\, u \in T$, and $B_{uv} = x_{uv}$ for $\{u, v\} \in \binom{T}{2}$. Each source $s_i$ is joined by positive-capacity edges only to destinations and has $\deg_x(s_i) = r_i$, so the $i$th row of $A$ sums to $r_i$.

For any nonempty proper subset $J$ of $T$, every cut of the complete graph with $U \cap T = J$ separates destinations and hence satisfies $x(\delta U) \ge F(U) = R$. As in the calculation preceding \eqref{eq:core-cut} in Section~\ref{sec:core}, choose the side of each source independently to minimize the cut capacity. This gives
\begin{equation}
B(\delta_T J) + \sum_{i=1}^k \min\{A_i(J),\, r_i - A_i(J)\} \ge R. \label{eq:reduced-cut-bound}
\end{equation}
Rearranging and using $r_i - \min\{a, r_i - a\} = \max\{a, r_i - a\}$ shows that $A$ and $B$ satisfy \eqref{eq:core-cut}. Finally, optimality of $x$ and feasibility of $c^{\mathrm{ext}}$ give
\begin{align}
&\sum_{i=1}^k \sum_{u \in T} A_{iu} d(s_i, u) + B \cdot (d|_T) \label{eq:reduced-capacity-cost-step-1} \\
&\quad= x \cdot d \le c^{\mathrm{ext}} \cdot d = \sum_{e=uv \in E} c_e d(u, v), \label{eq:reduced-capacity-cost}
\end{align}
which proves \eqref{eq:cost-reduction}.
\end{proof}

\subsection{Completing the Proof of Theorem~\ref{thm:main-equivalence}}

\begin{proof}[Proof of Theorem~\ref{thm:main-equivalence}]
If $R = 0$, all three conditions hold without transmitting any messages. Assume $R > 0$ and discard demands with $r_i = 0$. Condition~1) implies Condition~2), because forwarding messages along fractional routes is a coding solution; real path-flow rates are attained by limits of finite-block routing schemes.

Next, we show that Condition~2) implies Condition~3). Suppose the demands $(s_i,t_i,r_i)$ are achievable by multiple-unicast coding in $(G,c)$, and write $w(U)=\sum_{i:s_i\in U}r_i$. By the characterization in Section~\ref{subsec:requirements}, it suffices to verify $c(\delta_G U)\ge w(U)$ for every $U\subseteq V\setminus\{t_0\}$. If $U\cap T=\varnothing$, all destinations are outside $U$ and collectively recover every message originating inside $U$. These messages are mutually independent and independent of the source messages outside $U$. The cut-set bound~\cite{harvey2006} therefore gives $c(\delta_G U)\ge w(U)$, since the total rate of information transmitted from $U$ to its complement cannot exceed the cut capacity. The inequality also holds for limits of achievable rates with vanishing decoding error. If $U\cap T\ne\varnothing$, choose $t\in U\cap T$; the cut separates $t$ from $t_0$, and \eqref{eq:destination-connectivity} gives $c(\delta_G U)\ge\operatorname{maxflow}_G(t,t_0)\ge R\ge w(U)$. Thus Condition~3) holds.

Finally, we show that Condition~3) implies Condition~1). If $|T|=1$, all prescribed destinations coincide with $t_0$, so the two routing tasks are identical. Assume $|T|\ge2$, that $(s_i,t_0,r_i)$ are simultaneously routable, and that \eqref{eq:destination-connectivity} holds. After deleting zero-capacity edges, condition \eqref{eq:destination-connectivity} with $R > 0$ places all destinations of $T$ in one connected component, and single-sink routability to $t_0$ places all sources in that same component; hence we may assume that $G$ is connected. Moreover, attaching a new leaf $s_i^+$ to each $s_i$ by an edge of capacity $r_i$ and moving the $i$th source to $s_i^+$ preserves pairwise maximum flows in $T$, single-sink routability to $t_0$, and multiple-unicast routability, so we may assume that $s_1, \dots, s_k$ are distinct and $S \cap T = \varnothing$.

By Lemma~\ref{lem:duality}, it suffices to prove that for every $\ell \in \mathbb{R}_+^E$, the shortest-path pseudometric $d = d_\ell$ satisfies \eqref{eq:duality}. First suppose $r_i \in \mathbb{Q}_+$ and $\ell_e \in \mathbb{Q}_+$, so that $d = d_\ell$ takes finite rational values. By Section~\ref{subsec:requirements}, Condition~3) and \eqref{eq:destination-connectivity} are equivalent to $c(\delta_G U) \ge F(U)$ for every $U \subseteq V$. Applying Lemma~\ref{lem:capacity-transformation} yields rational capacities $A$ and $B$, and applying Theorem~\ref{thm:destination-routing} to $d|_T$ yields \eqref{eq:core-metric}. Using the row sums $\sum_{u \in T} A_{iu} = r_i$, the triangle inequality $d(s_i, t_i) \le d(s_i, u) + d(u, t_i)$, inequality \eqref{eq:core-metric}, and inequality \eqref{eq:cost-reduction} in turn gives
\begin{align}
\sum_{i=1}^k r_i d(s_i, t_i) &= \sum_{i=1}^k \sum_{u \in T} A_{iu} d(s_i, t_i) \label{eq:final-duality-step-1} \\
&\le \sum_{i=1}^k \sum_{u \in T} A_{iu} \bigl( d(s_i, u) + d(u, t_i) \bigr) \label{eq:final-duality-step-2} \\
&\le \sum_{i=1}^k \sum_{u \in T} A_{iu} d(s_i, u) + B \cdot (d|_T) \label{eq:final-duality-step-3} \\
&\le \sum_{e=uv \in E} c_e d(u, v) \le \sum_{e \in E} c_e \ell_e, \label{eq:final-duality}
\end{align}
where the last step uses $d(u, v) \le \ell_e$ on each edge $e = uv \in E$.

For general real $\ell \in \mathbb{R}_+^E$ and real rates $r_i > 0$, choose rational nonnegative approximations $\ell^{(n)} \to \ell$ and rational rates $0 < r_i^{(n)} \le r_i$ with $r_i^{(n)} \to r_i$. Scaling down the flow of source $s_i$ by $r_i^{(n)} / r_i \le 1$ in the single-sink routing to $t_0$ shows that $(s_i, t_0, r_i^{(n)})$ remain simultaneously routable, while $\operatorname{maxflow}_G(u, v) \ge R \ge \sum_{i=1}^k r_i^{(n)}$ for distinct $u, v \in T$. Thus \eqref{eq:final-duality} holds for each $n$; by the continuity of shortest-path distances in the edge lengths, letting $n \to \infty$ gives \eqref{eq:duality}, and Lemma~\ref{lem:duality} yields Condition~1).
\end{proof}

\section{The Langberg--M\'edard Conjecture}\label{sec:strong_reachability}

We first apply Theorem~\ref{thm:main-equivalence} to the underlying undirected network of $D$ to prove Theorem~\ref{thm:directed-general}, then use the flow characterization of multiple-multicast coding in Section~\ref{subsec:multicast} to derive Corollary~\ref{cor:lm-conjecture}. For any subset $U \subseteq V$, let $\delta_D^+(U)$ and $\delta_D^-(U)$ denote the sets of outgoing and incoming arcs of $U$ in the directed network $D$, respectively.

\begin{proof}[Proof of Theorem~\ref{thm:directed-general}]
Put $R = \sum_{i=1}^k r_i$. If $R = 0$, no flow is required; assume $R > 0$. Fix any destination $t_0 \in T$. By hypothesis, there is a feasible directed fractional flow in $D$ that simultaneously sends rate $r_i$ from each source $s_i$ to $t_0$. Removing the arc directions preserves feasibility of this flow, so the single-sink demands $(s_i, t_0, r_i)$ admit a simultaneous fractional routing in $\overline{D}$.

We next verify that $\overline{D}$ satisfies the pairwise destination max-flow condition \eqref{eq:destination-connectivity}. Consider any cut $U \subseteq V$ separating $T$, put $R_U = \sum_{i : s_i \in U} r_i$, and choose destinations $t_a \in T \setminus U$ and $t_b \in U \cap T$. The feasible directed flow to $t_a \notin U$ must carry all traffic from sources inside $U$ across outgoing arcs, giving $c(\delta_D^+(U)) \ge R_U$. Similarly, the feasible directed flow to $t_b \in U$ must carry all traffic from sources outside $U$ across incoming arcs, giving $c(\delta_D^-(U)) \ge R - R_U$. The outgoing arc set $\delta_D^+(U)$ and the incoming arc set $\delta_D^-(U)$ are disjoint. Therefore, the cut capacity in the underlying undirected network $\overline{D}$ satisfies
\begin{equation}
c(\delta_{\overline{D}} U) = c(\delta_D^+(U)) + c(\delta_D^-(U)) \ge R_U + (R - R_U) = R. \label{eq:directed-cut-bound}
\end{equation}
By the max-flow min-cut theorem, the maximum flow in $\overline{D}$ between every pair of distinct destinations in $T$ is at least $R$. Together with the single-sink routability to $t_0$, Theorem~\ref{thm:main-equivalence} implies that the prescribed demands $(s_i, t_i, r_i)$ admit a simultaneous fractional routing in $\overline{D}$.
\end{proof}

\begin{proof}[Proof of Corollary~\ref{cor:lm-conjecture}]
Suppose a common source rate $\rho \ge 0$ is achievable by multiple-multicast coding in $D$. By the flow characterization of multiple-multicast coding in Section~\ref{subsec:multicast}, for each destination $t_j$ ($j \in [k]$) considered separately, there is a feasible directed fractional flow in $D$ simultaneously sending $\rho$ units of flow from every source $s_1, \dots, s_k$ to $t_j$. Applying Theorem~\ref{thm:directed-general} with $T = \{t_1, \dots, t_k\}$ and $r_i = \rho$ for $i \in [k]$ yields a simultaneous fractional routing of rate $\rho$ from each $s_i$ to $t_i$ in $\overline{D}$. In particular, when all arc capacities equal one and $D$ is strongly reachable, assigning one unit of flow to each arc-disjoint path for each destination satisfies the hypothesis of Theorem~\ref{thm:directed-general} with $r_i = 1$, proving the unit-rate statement.
\end{proof}

\section{Conclusion}\label{sec:conclusion}

We proved an equivalence among multiple-unicast routing, multiple-unicast coding, and single-sink routing feasibility in undirected networks, and used it to prove the Langberg--M\'edard conjecture. Although coding can support rates unattainable by fractional routing in general undirected networks, whenever the maximum flow between every pair of distinct destinations is at least the total demand rate, coding and fractional routing for the prescribed demands are feasible exactly when all source rates can be routed simultaneously to one fixed destination. Thus, coding and fractional routing have the same feasible rate vectors within the range defined by this maximum-flow condition. Verifying the condition and, when it holds, deciding coding or routing feasibility require at most as many maximum-flow computations as there are destinations.

The proof successively eliminated internal nodes, edges between sources, and then sources, reducing the routing-cost analysis to an auxiliary network containing only destinations. The resulting cost inequality carried over to the original network and yielded routing feasibility through multicommodity-flow duality. This reduction method not only applies to other multicommodity-flow and network information-flow problems, but also identifies a sufficient condition for equivalence between coding and routing feasibility in undirected networks in terms of connectivity among destinations. Determining the exact gap between coding and routing rates under weaker connectivity conditions is an important direction for future research.

\appendices
\section{Statements of the Results Used}\label{sec:external-results}
\setcounter{equation}{0}
\renewcommand{\theequation}{\Alph{section}.\arabic{equation}}
\renewcommand{\theHequation}{appendix.\Alph{section}.\arabic{equation}}

For ease of reference, this appendix states the forms of two existing results used in the proofs.

\subsection{Mader's Splitting-Off Theorem}

Let $H$ be a finite undirected multigraph without loops; parallel edges are allowed. For distinct vertices $u,v$, let $\lambda_H(u,v)$ be the minimum number of edges in a cut separating $u$ and $v$. Splitting a pair of edges $xu,xv$ at $x$ deletes both edges and, if $u\ne v$, adds an edge $uv$; if $u=v$, the two edges are simply deleted.

The following complete-splitting form of Mader's theorem~\cite{mader1978} follows from the formulation in Lau and Yung~\cite[Theorem~1.2 and Section~1.1]{lau2013}.

\begin{externaltheorem}[Mader's splitting-off theorem]\label{thm:mader-external}
If a vertex $x$ has even degree and no incident bridge, then all edges incident to $x$ can be paired and split off in a sequence that preserves the local edge connectivity between every pair of distinct vertices other than $x$. After deleting the resulting isolated vertex $x$, the multigraph $H'$ on $V(H)\setminus\{x\}$ satisfies
\begin{equation}
\lambda_{H'}(u,v)=\lambda_H(u,v)
\quad\text{for all distinct }u,v\ne x. \label{eq:app-connectivity-preservation}
\end{equation}
\end{externaltheorem}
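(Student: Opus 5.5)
This is the complete-splitting form of Mader's theorem, Theorem~\ref{thm:mader-external}. The plan is to derive it by induction from the single-step splitting theorem, using the Lau--Yung formulation~\cite[Theorem~1.2]{lau2013}. In that formulation, if $x$ is not incident to a cut edge and $\deg(x)\neq 3$, then some pair of edges $xu,xv$ can be split off while preserving $\lambda(a,b)$ for all distinct $a,b\neq x$.

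\emph{Setup.} Let $\deg_H(x)=2q$ and induct on $q$. If $q=0$, then $x$ is isolated, and deleting it changes no cut between other vertices. Assume $q\ge1$. Since $\deg(x)$ is even, it is never $3$. The absence of an incident bridge is exactly the hypothesis of the single-step theorem, so we obtain a pair $xu,xv$ whose splitting gives $H_1$ with $\lambda_{H_1}(a,b)=\lambda_H(a,b)$ for all $a,b\neq x$. When $u=v$, the split simply deletes two parallel edges $xu$. This is permitted in the Lau--Yung setting and is the only admissible choice if $x$ has a single neighbour.

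\emph{Checking the hypotheses are maintained.} In $H_1$ we have $\deg_{H_1}(x)=2q-2$, which is still even. We must also show that $x$ still has no incident bridge. I would not try to argue that bridges are created only in controlled ways. Instead, use a parity argument: a bridge $xy$ of $H_1$ gives a set $W\ni x$ with $\delta_{H_1}W=\{xy\}$, so $\sum_{w\in W}\deg_{H_1}(w)$ is odd. In our application every degree is even, so such a $W$ cannot exist. Stated for the lemma in full generality, I would add the observation that splitting preserves the degree of every vertex other than $x$ and lowers $\deg(x)$ by $2$. Hence the parity of every vertex degree is preserved. If a bridge $xy$ appeared in $H_1$, I would show it was a bridge of $H$ or a consequence of splitting an edge pair across a $1$-edge cut.

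\emph{Main obstacle.} The case just mentioned is the delicate one: in general, splitting can create a bridge at $x$. Suppose $\delta_H W=\{xu,xv\}$ with $x\in W$ and $u,v\notin W$; splitting this pair leaves $x$ isolated inside $W$ relative to the rest. Then $\deg_{H_1}$ of a vertex side sums oddly only if degrees are odd. The fallback is to note that Lau--Yung's Section~1.1 gives exactly this complete-splitting statement for even-degree $x$ without a bridge. In the paper every degree is even (capacities were doubled), so the parity argument above suffices and I would invoke the cited complete form directly. Iterating $q$ times gives $H'$ with \eqref{eq:app-connectivity-preservation}, and $x$ becomes isolated and is deleted.
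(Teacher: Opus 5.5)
\textbf{Gap: keeping ``no bridge at $x$'' after a split is proved only when all degrees are even.}

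Your route is the right one: induct on $\deg(x)/2$ using the single-step Mader theorem, noting that $\deg(x)$ stays even and so is never $3$. Your parity argument is exactly the remark the paper makes in the proof of Lemma~\ref{lem:pairing-weights}: when \emph{every} degree is even there are no bridges at all, so the hypotheses survive each split.

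The statement as written, however, assumes only that $\deg_H(x)$ is even. In that setting the one nontrivial step of the induction is left open, namely that no bridge at $x$ appears after an admissible split.
\begin{itemize}
\item Your sentence about ``a bridge of $H$ or a consequence of splitting across a $1$-edge cut'' is not an argument.
\item Your claim that splitting can in general create a bridge at $x$ overlooks that admissibility rules this out.
\item Your example $\delta_H W=\{xu,xv\}$ produces an empty cut, not a bridge.
\item When $u=v$ the split lowers $\deg(u)$ by $2$, so only parities, not degrees, are preserved.
\end{itemize}
Falling back on Lau--Yung's Section~1.1 is what the paper itself does, since it simply cites the complete form. But then your induction contributes nothing beyond the all-even case, which is all the paper actually uses.

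\textbf{The missing idea: admissibility forbids a new bridge at an even-degree $x$.}
\begin{enumerate}
\item Suppose $xw$ is a bridge of $H_1$, and let $X\ni x$ be its side, so $\delta_{H_1}X=\{xw\}$.
\item If $u,v\in X$, then $\delta_H X=\{xw\}$, so $xw$ was already a bridge of $H$. If exactly one of $u,v$ lies in $X$, the new edge $uv$ also crosses $X$ in $H_1$. Hence $u,v\notin X$ and $\delta_H X=\{xu,xv,xw\}$.
\item Since $\deg_H(x)$ is even, an odd number of edges join $x$ to $X\setminus\{x\}$. So $x$ has a neighbour $a\in X\setminus\{x\}$.
\item Now $\lambda_{H_1}(a,u)\le|\delta_{H_1}X|=1$.
\item In $H$, every $Y$ with $a\in Y$ and $u\notin Y$ is crossed by $xu$ (if $x\in Y$) or by $xa$ (if $x\notin Y$). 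Neither edge is a bridge of $H$, so $|\delta_H Y|\ge 2$ and hence $\lambda_H(a,u)\ge 2$.
\item This contradicts admissibility of the split.
\end{enumerate}
With this lemma your induction proves the theorem in the stated generality, not just in the all-even case used in the paper.
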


Lemma~\ref{lem:pairing-weights} applies this result after multiplying the rational capacities by $2N$ and replacing them by parallel unit edges. Every vertex then has even degree, so every cut has even size and there is no bridge.

\subsection{Kir\'aly's Parsimonious Cut-Covering Theorem}

For a finite set $V$, a function $p:2^V\to\mathbb{Z}$ is skew-supermodular if, for every $X,Y\subseteq V$, at least one of
\begin{align}
p(X)+p(Y)&\le p(X\cap Y)+p(X\cup Y), \label{eq:app-skew-intersection} \\
p(X)+p(Y)&\le p(X\setminus Y)+p(Y\setminus X) \label{eq:app-skew-difference}
\end{align}
holds. A vertex $v$ is subadditive for a function $h$ if
\begin{equation}
h(X\cup\{v\})\le h(X)+h(\{v\})
\quad\text{for every }X\subseteq V\setminus\{v\}. \label{eq:app-subadditivity}
\end{equation}
The following integer-data form used in Lemma~\ref{lem:parsimonious-cover} combines Claim~1.1 and Theorem~2.1 of Kir\'aly~\cite{kiraly2007}.

\begin{externaltheorem}[Kir\'aly's parsimonious cut-covering theorem]\label{thm:kiraly-external}
Let $p_1,p_2:2^V\to\mathbb{Z}$ be skew-supermodular, and define $h(U)=\max\{p_1(U),p_2(U),0\}$. Suppose $h(U)=h(V\setminus U)$ for every $U\subseteq V$ and $h(\varnothing)=0$. Let $d$ be a nonnegative integer-valued pseudometric on $V$. On the complete graph with edge set $E_V=\binom{V}{2}$, consider
\begin{equation}
\begin{aligned}
&\min_{x \in \mathbb{R}_+^{E_V}} \; \sum_{\{u,v\}\in E_V}x_{uv}d(u,v) \\
&\text{subject to } x(\delta U)\ge h(U) \quad (U\subseteq V).
\end{aligned} \label{eq:app-cover-lp}
\end{equation}
Here $x(\delta U)$ is the sum of capacities on pairs with exactly one endpoint in $U$. If the feasible set is nonempty, there is an optimal solution $x$ satisfying
\begin{equation}
x(\delta\{v\})=h(\{v\}) \label{eq:app-parsimony}
\end{equation}
simultaneously at every vertex $v$ that is subadditive for $h$.
\end{externaltheorem}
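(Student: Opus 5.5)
The plan is an exchange (splitting-off) argument applied to a carefully chosen optimal solution. Among all optimal solutions of \eqref{eq:app-cover-lp}, choose $x$ that minimizes the total excess
\[
\Phi(x)=\sum_{v \text{ subadditive}} \bigl(x(\delta\{v\})-h(\{v\})\bigr).
\]
Such an $x$ exists because the optimal face is a nonempty polyhedron on which $\Phi$ is bounded below by $0$. Suppose, for contradiction, that some subadditive $v$ has $x(\delta\{v\})>h(\{v\})$. We will modify $x$ locally at $v$ so that the cost does not increase and feasibility is kept, while $\Phi$ strictly decreases. Two local moves are available.

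\emph{Deletion.} Decrease $x_{vu}$ by a small $\varepsilon>0$. This lowers the cost because $d\ge 0$.

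\emph{Splitting.} Decrease $x_{vu}$ and $x_{vw}$ by $\varepsilon$ and increase $x_{uw}$ by $\varepsilon$. The cost does not increase because $d(u,w)\le d(u,v)+d(v,w)$.

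Both moves lower $x(\delta\{v\})$. Neither move raises $x(\delta\{y\})$ at any other vertex $y$, so $\Phi$ strictly decreases. It therefore suffices to show that one of these moves preserves feasibility for small $\varepsilon$.

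Obstructions come from tight sets, that is, sets $X$ with $x(\delta X)=h(X)>0$. By the symmetry $h(U)=h(V\setminus U)$ we may take $v\in X$, and $X\ne\{v\}$ by the assumed excess at $v$. Deletion of $vu$ is blocked exactly when some tight $X\ni v$ has $u\notin X$. Splitting $vu,vw$ is blocked exactly when some tight $X\ni v$ has $u,w\notin X$. Each tight set has a type $j\in\{1,2\}$ with $h(X)=p_j(X)$.

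Tight sets of the same type are uncrossed using skew-supermodularity of $p_j$ together with the submodularity and posimodularity of $x(\delta\,\cdot)$. For two tight sets $X,Y$ of type $j$, whichever of the two inequalities holds for $p_j$ forces the corresponding pair ($X\cap Y,X\cup Y$ or $X\setminus Y,Y\setminus X$) to be tight as well, and no edge can join the relevant pieces. This gives the familiar laminar-type structure of maximal tight sets containing $v$ within each type. The standard Mader/Frank count then goes as follows. If every edge at $v$ leaves into some blocking tight set, the maximal blocking sets $X_1,\dots,X_r$ of one type cover $N(v)$ in the pattern required for splitting to fail. Summing their tightness equalities and applying skew-supermodularity yields $x(\delta\{v\})\le h(\{v\})$, which contradicts the excess.

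I expect the main obstacle to be the mixed case, where a type-1 tight set $X$ and a type-2 tight set $Y$ both contain $v$ and cross. Since $h=\max\{p_1,p_2,0\}$ is not itself skew-supermodular, the uncrossing above does not apply directly. My first attempt will be to uncross as far as possible and use the lower bound $h\ge p_j$ on the resulting pieces. Where a piece fails to be tight, I will fall back on subadditivity at $v$: for $X\ni v$,
\[
h(X)\le h(X\setminus\{v\})+h(\{v\})\le x(\delta(X\setminus\{v\}))+h(\{v\}).
\]
Comparing this with the excess $x(\delta\{v\})>h(\{v\})$ shows that the edges from $v$ into $X\setminus\{v\}$ carry enough capacity to contradict the tightness of $X$, unless some edge $vu$ leaves $X$.

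Combining these two facts should show the following: if no deletion is possible, then the maximal blocking sets of the two types must have disjoint traces on $N(v)$, and this leaves a pair $u,w$ that is separated from $v$ by no single tight set, so splitting $vu,vw$ is feasible. The hypothesis $h(\varnothing)=0$ and the symmetry of $h$ are used to exclude the degenerate tight sets $\varnothing$ and $V$ during uncrossing. Since only finitely many constraint families are involved, a positive $\varepsilon$ exists whenever no set blocks the move. This contradicts the minimality of $\Phi$, so $x(\delta\{v\})=h(\{v\})$ holds simultaneously at every subadditive vertex.
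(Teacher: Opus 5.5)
\paragraph{Review.} The paper does not prove this statement; it imports it from Kir\'aly~\cite{kiraly2007}. Your extremal framework is sound as far as it goes. An optimal $x$ minimizing $\Phi$ exists, both moves are cost-nonincreasing, non-tight constraints keep slack for small $\varepsilon$, and $\Phi$ strictly drops. The gap is the one nontrivial step: showing that some deletion or splitting at an excess subadditive $v$ is unblocked. You leave this at ``should show'', and the ingredients you sketch would not deliver it.

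First, in the single-type case, tightness plus skew-supermodularity cannot give $x(\delta\{v\})\le h(\{v\})$. Take $V=\{a,b,v\}$ with $h=1$ on $\{b\}$ and $\{a,v\}$ and $h=0$ elsewhere; this $h$ is symmetric and skew-supermodular, so take $p_1=p_2=h$. Set $d(a,b)=2$ and $d(a,v)=d(b,v)=1$. The unique optimum $x_{bv}=1$ has excess at $v$, deleting $vb$ is blocked by the tight set $\{b\}$, and nothing is contradicted, precisely because $v$ is not subadditive. So subadditivity must close every case.

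Second, the consequence you draw from subadditivity is reversed. Write $x(v,Y)=\sum_{y\in Y}x_{vy}$. For a tight $X\ni v$, subadditivity gives $x(v,V\setminus X)\le x(v,X\setminus\{v\})+h(\{v\})$. With the excess, the contradictory situation is therefore a tight $X\ni v$ that \emph{all} of $v$'s capacity leaves, not one that no edge leaves.

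Third, uncrossing yields disjointness \emph{within} each type. Across types, the maximal sets through a common neighbour $u$ both contain $u$, so a further argument is needed to produce the unblocked pair.

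\paragraph{The missing lemma.} Work with complements $Z\subseteq V\setminus\{v\}$. Then deleting $vu$ is blocked by a tight $Z\ni u$, and splitting $vu,vw$ is blocked by a tight $Z\supseteq\{u,w\}$. Put $N=\{u:x_{vu}>0\}$; any such blocking $Z$ has $x(\delta Z)>0$, hence has a type.

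Suppose tight $Z,Z'$ of the same type $j$ share some $u\in N$. The difference inequality for $p_j$ would force $x$ to vanish on all pairs between $Z\cap Z'$ and $V\setminus(Z\cup Z')$, which is impossible since $x_{vu}>0$. So the intersection inequality holds and $Z\cup Z'$ is tight of type $j$. Hence the type-$j$ tight sets containing $u$ have a largest member $M_j(u)$ (treat it as $\varnothing$ if none exists), and $M_j(c)=M_j(u)$ for every $c\in M_j(u)\cap N$.

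Now assume all moves are blocked, so every element and every pair of $N$ lies in some $M_j$. Fix $u\in N$, and suppose neither $M_1(u)$ nor $M_2(u)$ contains $N$. Pick $b\in N\setminus M_1(u)$ and $c\in N\setminus M_2(u)$. Blocking of $\{u,b\}$ and $\{u,c\}$ (or of the deletion, if $b=u$ or $c=u$) forces $b\in M_2(u)$ and $c\in M_1(u)$. Hence $b\ne c$, and $\{b,c\}$ lies in neither $M_1(c)=M_1(u)$ nor $M_2(b)=M_2(u)$. So splitting $vb,vc$ is unblocked, a contradiction.

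Thus some tight $Z\subseteq V\setminus\{v\}$ contains $N$. Apply subadditivity at $v$ to $V\setminus(Z\cup\{v\})$ and use the symmetry of $h$ to get
\[
x(\delta(Z\cup\{v\}))=h(Z)-x(\delta\{v\})<h(Z)-h(\{v\})\le h(Z\cup\{v\}),
\]
which contradicts feasibility. With this lemma in place, your extremal argument is complete.
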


The proof of Lemma~\ref{lem:parsimonious-cover} extends this statement to rational cut bounds and rational pseudometrics by clearing denominators and rescaling the capacities.

\section*{Acknowledgment}\label{sec:acknowledgment}
During the research and preparation of this manuscript, the authors used OpenAI's ChatGPT and Codex to assist with proof brainstorming, initial drafting, manuscript revision, and figure generation, and used Google's Gemini for proof verification and exposition refinement. The authors have thoroughly verified all mathematical derivations and text, and take full responsibility for the accuracy, originality, and integrity of the paper.

\bibliographystyle{IEEEtran}
\bibliography{references}
\end{document}